\documentclass{article}

\usepackage[final]{neurips_2025}

\usepackage[utf8]{inputenc} 
\usepackage[T1]{fontenc}    
\usepackage{hyperref}       
\usepackage{url}            
\usepackage{booktabs}       
\usepackage{amsfonts}       
\usepackage{nicefrac}       
\usepackage{microtype}      
\usepackage{xcolor}         

\AtBeginDocument{%
	\providecommand\BibTeX{{%
			\normalfont B\kern-0.5em{\scshape i\kern-0.25em b}\kern-0.8em\TeX}}}
\usepackage{fancyhdr}

\usepackage{enumitem}
\usepackage{amsmath}
  
\usepackage{tcolorbox}
\usepackage{amssymb}
\usepackage{bm}
\usepackage{nicefrac}
\usepackage{booktabs}
\usepackage{array}
\usepackage{multirow}
\usepackage{threeparttable}
\usepackage{makecell}
\usepackage[procnumbered,ruled,vlined,linesnumbered]{algorithm2e}
\usepackage{siunitx}
\usepackage{stfloats}
\usepackage{graphicx}
\usepackage{subfigure}
\usepackage{hyperref}
\usepackage{graphicx}
\usepackage{caption}
\usepackage{subcaption}

\newtheorem{problem}{Problem}
\newtheorem{theorem}{Theorem}[section]

\newtheorem{lemma}[theorem]{Lemma}

\newenvironment{fminipage}%
{\begin{Sbox}\begin{minipage}}%
		{\end{minipage}\end{Sbox}\fbox{\TheSbox}}

\def\calG{\mathcal{G}}

\def\calR{\mathcal{R}}

\newcommand{\removelatexerror}{\let\@latex@error\@gobble}

\newcommand\LL{\bm{\mathit{L}}}

\newcommand\zz{\boldsymbol{\mathit{z}}}

\newcommand\bb{\boldsymbol{\mathit{b}}}
\newcommand\cc{\boldsymbol{\mathit{c}}}

\newcommand\ee{\boldsymbol{\mathit{e}}}

\newcommand\sss{\boldsymbol{\mathit{s}}}

\renewcommand\SS{\boldsymbol{\mathit{S}}}
\renewcommand\AA{\boldsymbol{\mathit{A}}}

\newcommand\DD{\boldsymbol{\mathit{D}}}

\newcommand\PP{\boldsymbol{\mathit{P}}}

\newcommand\II{\boldsymbol{\mathit{I}}}

\DeclareMathOperator*{\argmin}{arg\,min}
\DeclareMathOperator*{\argmax}{arg\,max}

\DontPrintSemicolon
\SetKw{KwAnd}{and}
\SetFuncSty{textsc}
\SetKwInOut{Input}{Input\ \ \ \ }

\SetKwInOut{Output}{Output}

\usepackage{tabularx}
\usepackage{stfloats}

\DontPrintSemicolon
\SetKw{KwAnd}{and}
\SetFuncSty{textsc}
\SetKwInOut{Input}{Input\ \ \ \ }
\SetKwInOut{Output}{Output}

\usepackage{tabularx}
\usepackage{stfloats}

\begin{document}
\title{Fast Computation and Optimization for  Opinion-Based Quantities of   Friedkin-Johnsen Model}

%

\author{%
  Haoxin Sun, Yubo Sun, Xiaotian Zhou, Zhongzhi Zhang\thanks{Corresponding author.} \\
  College of Computer Science and Artificial Intelligence \\
  Fudan University\\
  \texttt{\{23110240089,25110890019\}@m.fudan.edu.cn, \{20210240043,zhangzz\}@fudan.edu.cn} \\
}


\maketitle

\begin{abstract}
    In this paper, we address the problem of fast computation and optimization of opinion-based quantities in the Friedkin–Johnsen (FJ) model. We first introduce the concept of partial rooted forests, based on which we present an efficient algorithm for computing relevant quantities using this method. Furthermore, we study two optimization problems in the FJ model: the Opinion Minimization Problem and the Polarization and Disagreement Minimization Problem. For both problems, we propose fast algorithms based on partial rooted forest samplings. Our methods reduce the time complexity from linear to sublinear. Extensive experiments on real-world networks demonstrate that our algorithms are both accurate and efficient, outperforming state-of-the-art methods and scaling effectively to large-scale networks.
\end{abstract}

\section{Introduction}
 
Online social networks and social media have become integral to our daily lives, fundamentally altering how individuals communicate, exchange, and form opinions~\cite{Le20,KiArNiNi17,SiCiAr20,YaJo23,ZhBaZh21,SuZh23}. Recent studies suggest that, unlike traditional forms of communication, online interactions in the digital era have profoundly impacted human behavior, facilitating the widespread, critical, and complex propagation of information~\cite{NoCaFlMaRa22}. To better understand the mechanisms driving opinion formation and dissemination, various mathematical frameworks for opinion dynamics have been developed~\cite{JiMiFrBu15,PrTe17,DoZhKoDiLi18,AnYe19}. Among these models, the Friedkin-Johnsen (FJ) model~\cite{FrJo90} stands out as one of the most widely used, with applications spanning multiple fields~\cite{BeWaVaHoShAl21,FrPrTePa16}. For instance, a modified FJ model was recently used to study the Paris Agreement negotiations, revealing key factors behind the achieved consensus~\cite{BeWaVaHoShAl21}.

The fundamental concept in opinion dynamics is the opinion itself, which serves as the basis for many opinion-based quantities that have garnered significant attention. Among these quantities are the overall opinion, which reflects public sentiment on specific issues, and various social phenomena such as polarization, disagreement, and conflict. Given the importance of these opinion-based quantities, a key challenge is how to effectively compute and optimize them. A Laplacian solver-based approach for computing these quantities was proposed in~\cite{XuBaZh21, XuZhGuZhZh22}, followed by a sampling-based algorithm to accelerate the computation process~\cite{NeDoPe24}. For the optimization of opinion-based quantities, a range of methods has been  introduced, including matrix inversion~\cite{GiTeTs13}, vector projection~\cite{ZhZh21, ZhBaZh21}, eigencentralities~\cite{AmSi19}, convex optimization~\cite{MuMuTs18}, and sampling techniques~\cite{SuZh23}. Although many effective methods have been proposed for computing and optimizing opinion-based quantities in the FJ model, they are often limited to specific problems or suffer from high time complexity. Consequently, a unified framework to efficiently address both computation and optimization of opinion-based quantities in the FJ model is imperative.

In this paper, we propose several algorithms to compute and optimize the opinion-based quantities. Our main contributions are as follows:
\begin{itemize}[leftmargin=12pt,topsep=6pt]
    \item We introduce the concept of partial rooted forests and, based on which we propose a fast sampling-based algorithm for computing opinion-based quantities. Our methods effectively capture the essential structural information of the graph, ensuring both efficiency and accuracy.
    \item We address two optimization problems in the FJ model: minimizing a weighted average of expressed opinions, and reducing polarization and disagreement via edge addition. For both problems, we design fast algorithms based on partial rooted forest samplings, reducing the time complexity from linear to sublinear.
    \item We conduct extensive experiments on a variety of real-world networks, which shows that our algorithms demonstrate both high accuracy and efficiency compared to state-of-the-art methods and are scalable to large networks.
\end{itemize}

\section{Related Work}


Mathematical modeling plays a crucial role in understanding opinion dynamics, and numerous models have been proposed over the years to capture various aspects of opinion formation~\cite{JiMiFrBu15,PrTe17,DoZhKoDiLi18,AnYe19}. Among these, the Friedkin-Johnsen (FJ) model~\cite{FrJo90} stands out as a foundational model, building upon and extending the DeGroot model~\cite{De74}. Given its theoretical importance and practical applications, the FJ model has garnered significant attention since its introduction. A sufficient stability condition for the FJ model was derived in~\cite{RaFrTeIs15}, and the model's average innate opinion was characterized in~\cite{DaGoPaSa13}. Additionally, the vector of expressed opinions at equilibrium was formulated in~\cite{DaGoPaSa13,BiKlOr15}. Further interpretations and insights into the FJ model have also been provided~\cite{GhSr14,BiKlOr15}. 

The sum of opinions  has attracted significant attention, with various research groups addressing the optimization problem of maximizing overall opinion through leader selection~\cite{YiCaPa21,HuZa22,ZhZh23} or link recommendation~\cite{ZhZh21,ZhZhLiZh23} based on the DeGroot model. In the case of the FJ model, node-based strategies have been proposed over the past decade to optimize the sum of opinions on unsigned graphs, including modifications to initial opinions~\cite{AhDeHaMaYa15}, the expression of opinions~\cite{GiTeTs13,SuZh23}, and sensitivity to persuasion~\cite{AbKlPaTs18,ChLiSo19,AbChKlLiPaSoTs21,MaMiTaTa21}.  Our solution for optimizing the average expressed opinion in the FJ model is both more efficient and effective compared to the state-of-the-art approach presented in~\cite{SuZh23}.

The explosive growth of social media and online social networks has given rise to several social phenomena, including polarization~\cite{MaTeTs17,MuMuTs18,AmBoSi19}, disagreement~\cite{MuMuTs18}, filter bubbles~\cite{BaChLa23,La22}, conflicts~\cite{ChLiDe18}, and controversies~\cite{ChLiDe18}. In response to these challenges, research has evolved in various directions, with fast algorithms being developed to efficiently compute these quantities~\cite{XuBaZh21,XuZhGuZhZh22,NeDoPe24}. Some studies have focused on identifying groups of users with an open attitude toward opposing information~\cite{GeMiYoZe18,FaBaGe20}, aiming to connect users with differing opinions in order to mitigate the filter bubble effect~\cite{ToRoGo21,ZhBaZh21,AmSi19,MuMuTs18,ZhZh22}. In this paper, we study the problem of computing opinion-based quantities in the FJ model~\cite{NeDoPe24, XuBaZh21} and  two optimization problems in the FJ model~\cite{SuZh23,ZhBaZh21}. Our proposed algorithms demonstrated greater efficiency and effectiveness compared to the state-of-the-art methods in~\cite{NeDoPe24, SuZh23,ZhBaZh21}.

\section{Preliminaries}


\subsection{ Graph and  Laplacian Matrix}
Let $\calG=(V,E)$ denote  an unweighted simple undirected graph, which consists of $n=|V|$ nodes  and $m=|E|$ edges.  An edge \( (v_i,v_j) \in E \) indicates the edge between node \(v_i\) and node \(v_j\). In what follows, $v_i$ and $i$ are used interchangeably to represent node $v_i$ if incurring no confusion.  The structure of graph $\calG=(V,E)$ is captured by its adjacency matrix $\AA=(a_{ij})_{n\times n}$, where $a_{ij} = a_{ji} = 1$ if there is an edge between node $v_i$ and node $v_j$   and $ 0$ otherwise. The  degree $d_i$ of node $i$  is defined by \(d_i =\sum_{j=1}^n a_{ij}\). The  diagonal  degree matrix representing the degrees of graph $\calG$ is ${\DD} = {\rm diag}(d_1, d_2, \ldots, d_n)$, and the Laplacian matrix is ${\LL}={\DD}-{\AA}$. For any given node $i$, $N_i$  denotes the  set  of its neighbors, meaning $N_i =\{ j: (i,j)\in E\}$.   A path $P$ from node $v_1$ to $v_j$ is a sequence of alternating nodes and edges $v_1$,$(v_1,v_2)$,$v_2$,$\cdots$, $v_{j-1},(v_{j-1},v_j)$, $v_j$ where each node is unique and every edges connects $v_i$  to $v_{i+1}$. A loop  is a path plus an edge from the ending node to the starting node.

\subsection{Friedkin-Johnsen Model}
The Friedkin-Johnsen (FJ) model~\cite{FrJo90} is a widely used framework for modeling opinion evolution and formation. In the FJ model applied to a graph $\calG = (V, E)$, each node (or agent) $i \in V$ is characterized by two types of opinions: an internal opinion $s_i$ and an expressed opinion $z_i(t)$ at time $t$. The internal opinion $s_i \in [0, 1]$ reflects the inherent stance of node $i$ on a particular topic. Throughout the opinion evolution process, the internal opinion $s_i$ remains fixed, while the expressed opinion $z_i(t)$ evolves at time $ t+1$ as  $z_i(t+1) = {(s_i +\sum_{j\in N_i} a_{ij}z_j(t))}/{(1+\sum_{j\in N_i} a_{ij})}.$


 Let $ \sss = (s_1,s_2,\cdots,s_n)^\top$ denote the vector of internal opinions, and let $ \zz(t) = (z_1(t),z_2(t),\cdots,z_n(t))^\top $ denote the vector of expressed opinions at time $ t $. According to~\cite{BiKlOr15}, as $ t $ approaches infinity, $ \zz(t) $ converges to an equilibrium vector  $ \zz = (z_1,z_2,\cdots,z_n)^\top$ satisfying $ \zz = (\II+\LL)^{-1}\sss $. Define $\mathbf{\Omega}=\left(\II+\LL\right)^{-1}=(\omega_{ij})_{n \times n}$, referred to as the forest matrix~\cite{ChSh97,ChSh98}. The forest matrix $\Omega$ is doubly stochastic for undirected graphs, with all its components in the interval $[0,1]$. Furthermore, for each column, the diagonal elements are greater than the off-diagonal elements, that is  $ 0\leq\omega_{ji}< \omega_{ii}\leq 1$ for any pair of different nodes $ i$ and $j $, and the diagonal element $\omega_{ii}$ of matrix $\mathbf{\Omega}$ satisfies $\frac{1}{1+d_i}\leq \omega_{ii} \leq \frac{2}{2+d_i}$~\cite{SuZh23}. The forest matrix $\mathbf{\Omega}$ serves as the fundamental matrix of the FJ model for opinion dynamics~\cite{GiTeTs13}. For every node $i \in V$, its expressed opinion $z_i$ is given by $z_i=\sum^n_{j=1}  \omega_{ij}s_j$, a convex combination of the internal opinions for all nodes. 

\section{Partial Rooted Forest Samplings for Estimating the Forest Matrix}
The forest matrix $\mathbf{\Omega}$ is the fundamental matrix of the FJ model and plays a key role in its computation and optimization. Existing methods~\cite{sun2024efficient,sun2024fast,SuZh23} rely on generating rooted spanning forests over the entire graph, requiring $O(ln)$ time.   We propose a local approach, called partial rooted forest samplings, as a natural extension of previous methods by using absorbing random walks. It avoids full-graph traversal and offers improved efficiency and scalability.

\subsection{Absorbing Random Walk }\label{sec-absorbing}
The forest matrix $\mathbf{\Omega}$  is  doubly stochastic for undirected graphs. We initiate a random walk from node $i$ and assume the walk is currently at node $k \in V$. At node $k$, the walk has a probability of $\frac{1}{1 + d_k}$ to stop, and with probability $1 - \frac{1}{1 + d_k}$, it moves to a randomly selected neighbor of node $k$. If the walk eventually stops at node $q$, we say that the walk has been absorbed by $q$. In this context, $\omega_{ij}$ represents  the probability that a walk starting at node $i$ will eventually be absorbed at node  $j$.

To formally explain this, we expand the forest matrix as an infinite series. Define $\PP =(\II+\DD)^{-1}\AA$, then  the forest matrix has the following form: $  \mathbf{\Omega}= (\II -(\II+\DD)^{-1}\AA)^{-1}(\II+\DD)^{-1} = (\II-\PP)^{-1}(\II+\DD)^{-1} = \sum_{k\geq 0}\PP^k(\II+\DD)^{-1}$. Here, the $i,j$-th entry of $\PP^k$  represents the probability that a walk starting at $i$ takes exactly $ k$ steps to reach $j$. Multiplying this by $\frac{1}{1 + d_j}$ gives the probability that the walk takes $k$ steps and is then absorbed at $j$.Then $\omega_{ij} $  can be expressed as $        \omega_{ij} = \ee^\top_{i}\mathbf{\Omega}\ee_j= \sum_{k\geq 0}\ee^\top_{i}\PP^k(\II+\DD)^{-1}\ee_j =\frac{1}{1+d_j} \sum_{k\geq 0}\ee^\top_{i}\PP^k\ee_j.$
Choose an initial node $i \in V$, and perform the absorbing random walk several times. Using the probabilistic interpretation of the forest matrix, where $\omega_{ij}$ represents the probability of a random walk starting at node $i$ and being absorbed at node $j$, we can estimate the $i$-th row of the forest matrix. The following lemma establishes the expected time complexity of the absorbing random walk:

 \begin{lemma}\label{le-absorbedrw}
    For an undirected graph $\calG = (V, E)$ and its related forest matrix $\mathbf{\Omega}$, the expected length of an absorbing random walk starting at node $i$ is $\sum_{j=1}^n{\omega_{ij}}d_j$. If the initial node $i$ is chosen randomly, the expected length of the random walk to absorption is $\bar{d} = \frac{1}{n}\sum_{i=1}^nd_i = \frac{2m}{n}$.
 \end{lemma}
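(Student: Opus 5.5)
The plan is to write the walk length as a sum of per-visit contributions and then evaluate that sum using the series expansion of $\mathbf{\Omega}$ from Section~\ref{sec-absorbing}. We take the \emph{length} of the walk to be the number of moves before absorption.

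\textbf{Step 1: count moves by visits.} Let $N_j$ be the number of visits to node $j$, counting the starting position as a visit. Each visit to $j$ ends in one of two ways. With probability $\frac{1}{1+d_j}$ the walk is absorbed, and with probability $\frac{d_j}{1+d_j}$ it makes one move. The length of the walk is therefore $\sum_j M_j$, where $M_j$ is the number of moves made out of $j$.

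\textbf{Step 2: expected visits.} By the Markov property each visit contributes $\frac{d_j}{1+d_j}$ in expectation, so $\mathbb{E}[M_j]=\frac{d_j}{1+d_j}\mathbb{E}[N_j]$. The probability that the walk is at $j$ after exactly $k$ moves is $\ee_i^\top\PP^k\ee_j$, because $\PP=(\II+\DD)^{-1}\AA$ already includes the survival factor at every step. Summing over $k$ gives
\[
\mathbb{E}[N_j]=\sum_{k\ge0}\ee_i^\top\PP^k\ee_j=(1+d_j)\,\omega_{ij},
\]
which is the identity derived just before the lemma. Hence $\mathbb{E}[M_j]=d_j\omega_{ij}$, and summing over $j$ yields $\sum_{j}\omega_{ij}d_j$.

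\textbf{Justification.} The exchange of expectation and infinite sum needs only nonnegativity and monotone convergence. Convergence of the series holds because $\|\PP\|_\infty\le \max_k \frac{d_k}{1+d_k}<1$.

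\textbf{Step 3: random start.} For a uniformly random starting node, the expected length is $\frac1n\sum_i\sum_j\omega_{ij}d_j=\frac1n\sum_j d_j\sum_i\omega_{ij}$. We cite that $\mathbf{\Omega}$ is doubly stochastic for undirected graphs, so each column sum is $1$. This leaves $\frac1n\sum_j d_j=\frac{2m}{n}$.

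\textbf{Main obstacle.} The only delicate point is the bookkeeping in Step 1. One must count moves rather than visits: counting visits would give $\sum_j\omega_{ij}(1+d_j)$, which is one more than the claimed value. The fix is to attach the factor $\frac{d_j}{1+d_j}$ to each visit, as done above.
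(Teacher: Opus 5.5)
Your proof is correct, and it takes a different route from the paper.

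\textbf{The paper's route.} It uses first-step analysis. Conditioning on the first move gives the linear system $\boldsymbol{\mathit{l}} = (\II+\DD)^{-1}\DD\boldsymbol{1} + (\II+\DD)^{-1}\AA\boldsymbol{\mathit{l}}$ for the vector of expected lengths. Multiplying by $\II+\DD$ turns this into $(\II+\LL)\boldsymbol{\mathit{l}}=\DD\boldsymbol{1}$, so $\boldsymbol{\mathit{l}}=\mathbf{\Omega}\DD\boldsymbol{1}$ in one line.

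\textbf{Your route.} You compute the occupation measure of the killed walk, namely expected visits $(1+d_j)\omega_{ij}$ from the Neumann series of $\PP$, and thin it by the move probability $\frac{d_j}{1+d_j}$. In matrix form this is $\sum_{k\ge0}\PP^k(\II+\DD)^{-1}\DD\boldsymbol{1}=\mathbf{\Omega}\DD\boldsymbol{1}$ read entrywise. Your averaging step is the same as the paper's.

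\textbf{What each buys.} The paper's argument is shorter. However, it tacitly needs the $l_i$ to be finite before the linear system can be solved, since an infinite vector satisfies the recursion vacuously. Your argument gets finiteness for free from $\|\PP\|_\infty<1$ and monotone convergence. It also gives a finer per-node statement: the expected number of moves out of $j$ is $d_j\omega_{ij}$. With $j=i$, your visit count $(1+d_i)\omega_{ii}$ is the same quantity the paper later uses for the per-node cost in the proof of Theorem~\ref{th-partial-rooted-forest}.

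\textbf{One step to write out.} The identity $\mathbb{E}[M_j]=\frac{d_j}{1+d_j}\mathbb{E}[N_j]$ deserves an explicit line. Write $M_j=\sum_{k\ge0}\mathbb{I}\{\text{walk alive at time }k,\ X_k=j,\ \text{it moves at time }k\}$. Given $X_k=j$, the move/absorb decision at time $k$ is independent of the past. So the $k$-th term has expectation $\ee_i^\top\PP^k\ee_j\cdot\frac{d_j}{1+d_j}$.
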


 \begin{proof}
     Let $l_i$ denote the expected length of an absorbing random walk starting at node $i$, and define the vector ${\boldsymbol{\mathit{l}}} = (l_1,\cdots,l_n)^\top$. For any $i \in V$, the relationship between $l_i$ and its neighbors is given by
     $ l_i = \frac{d_i}{1+d_i}(1+\frac{1}{d_i}
         \sum_{j\in N_i^+}l_j)$. Rewriting this in matrix form, we have
$         {\boldsymbol{\mathit{l}}} =(\II+\DD)^{-1}\DD \boldsymbol{1} +(\II+\DD)^{-1}\AA{\boldsymbol{\mathit{l}}}$. 
     Solving this equation yields ${\boldsymbol{\mathit{l}}} = \mathbf{\Omega}\DD \boldsymbol{1}$. Thus, for each $i \in V$, the expected length is $l_i = \sum_{j=1}^n{\omega_{ij}}d_j$. If the initial node $i$ is chosen uniformly at random, the expected length of the random walk is $                  \bar{l} = \frac{1}{n}\sum_{i=1}^nl_i =\frac{1}{n}\sum_{i=1}^n\sum_{j=1}^n{\omega_{ij}}d_j =
                  \frac{1}{n}\sum_{j=1}^n(\sum_{i=1}^n{\omega_{ij}})d_j = \bar{d},$ where the property   $\sum_{i=1}^n \omega_{ij} = 1$ in undirected graphs is used. 
 \end{proof}

By Lemma~\ref{le-absorbedrw}, we can estimate $\zz$ by performing absorbing random walks from each node in $V$ multiple times. However, this method requires iterating over all nodes, which is computationally inefficient.

 \subsection{Partial Rooted Forest Sampling }
In this subsection, we introduce a sampling method for constructing a partial rooted forest, which significantly reduces sampling time. We first define the concept of a partial rooted forest. In an undirected graph $\calG=(V,E)$, a  rooted tree of  $\calG$ is a connected subgraph without cycle, where one node is set to be the root. An isolated node is considered as a tree with the root being itself. A partial rooted forest $\phi = (V_{\phi},E_{\phi})$ is a subgraph of $G$, where all connected components are rooted trees. The root set $\mathcal{R}(\phi)$ of $\phi$ is defined as the collection of root nodes from all rooted trees in $\phi$. Since each node $i$ in $V_\phi$  belongs to  a specific rooted tree, we define a function $r_{\phi}(i): V_{\phi} \rightarrow \calR(\phi) $ mapping node $i$ to the root of its corresponding rooted tree.  If $r_{\phi}(i) = j$, then $j \in \mathcal{R}(\phi)$, and both $i$ and $j$ belong to the same rooted tree in $\phi$.

Next, we describe the procedure for generating a partial rooted forest using a loop-erased absorbing random walk. This involves the loop-erasure technique, which eliminates loops from a random walk in chronological order~\cite{LaFr79, La80}. By applying this technique, we can utilize the paths generated by the absorbing random walk instead of discarding them. The procedure is as follows:

(i) \textbf{Initialization:}
Let $\SS = \{v_1, \ldots, v_p\} \subset V$ be a set of $p \geq 2$ nodes. Initialize $\phi = (V_{\phi}, E_{\phi}) = (\emptyset, \emptyset)$ and set the indicator $i = 1$.

(ii)\textbf{ Performing absorbing Random Walk:}
Select the first node $u = v_i$. Perform an absorbing random walk starting from $u$. Suppose that the walk is currently at a node $k \notin V_{\phi}$, and the walk is absorbed with probability $\frac{1}{1 + d_k}$. If the node is absorbed at $k$, $k$ is marked as a root node. Otherwise, with probability $1 - \frac{1}{1 + d_k}$, the walk moves to a uniformly chosen neighbor of $k$. If the walk reaches a node $k \in V_{\phi}$, it is immediately absorbed.

(iii)\textbf{ Loop Erasure:}
Let $P_u$ represent the trajectory of the walk  from node $u$ to its absorption point. Apply the loop-erasure technique to $P_u$ to obtain a simple path $\hat{P}_u$. Add the nodes and edges from $\hat{P}_u$ to $V_{\phi}$ and $E_{\phi}$, respectively.

(iv)\textbf{ Iteration:}
If $i < p$, increment $i$ by 1 and repeat step (ii). Otherwise, terminate the process and return the partial rooted forest $\phi$.

The procedure is detailed in Algorithm PFS (\underline{P}artial Rooted \underline{F}orest \underline{S}ampling); due to space constraints, the pseudocode is provided in the appendix. 

The following theorem establishes the expected time complexity of the Partial Rooted Forest Sampling (PFS) algorithm. The time complexity is $O(rpl)$, where $r$ is a ratio dependent on the graph structure, with an upper bound   $\bar{d}$. In real-world web and social networks, the average degree is typically $O(\log n)$ or a constant~\cite{WaYaXi17, XuYiZhKaZh20}. Furthermore, our experimental results demonstrate that $r$ is consistently smaller than $\bar{d}$. These observations collectively indicate the efficiency of our algorithm in practical scenarios.

\begin{theorem}\label{th-partial-rooted-forest}
For an undirected graph $\calG = (V, E)$ and a set of $p$ nodes $\SS \subset V$, the expected time complexity of Algorithm \textsc{PFS} is $O(rpl)$, where $r$ is the ratio of the expected number of nodes in a partial rooted forest $\phi \in L$ to $p$, and $l$ is the number of samples. $r$ satisfies relation $1\leq r \leq \bar{d}$ if we randomly sample the set $\SS$ from $V$.  Algorithm \textsc{PFS}  is more efficient than performing absorbing random walks individually for each node in $\SS$.
\end{theorem}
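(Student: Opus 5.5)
The argument splits into three parts: the cost of one run of \textsc{PFS}, the bound $1\le r\le\bar d$ when $\SS$ is random, and the comparison with independent absorbing walks. The first part is bookkeeping. In one run, every step of the walk from $v_i$ either lands on a node outside $V_\phi$ or hits $V_\phi$ and is absorbed at once. Loop erasure is linear in the trajectory length $|P_u|$. Hence the work of one run is of the order of the total trajectory length.

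For the first part I would bound the total trajectory length by the number of nodes in the final forest, up to a constant. Following Wilson's algorithm, I would use the cycle-popping / stack representation. Give each node $k$ an i.i.d.\ stack of instructions: ``absorb'' with probability $1/(1+d_k)$, otherwise ``move to a uniform neighbour''. Then the set of nodes visited and the popped cycles do not depend on the order of the starting nodes. The expected number of times a node $k$ is exited before it is fixed in the forest is governed by the Green's function of the walk killed at $V_\phi$. Summing over the nodes of $\phi$ should give expected total work $O(\mathbb{E}|V_\phi|)=O(rp)$ per sample, where $r:=\mathbb{E}|V_\phi|/p$ by the definition in the statement. Repeating over $l$ samples gives $O(rpl)$.

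This is the step I expect to be the main obstacle. Random-walk steps that revisit nodes, i.e.\ erased loops, are not literally counted by $|V_\phi|$. To absorb them into the constant, I would first try the Wilson-type identity: expected running time equals $\sum_k$ (expected visits to $k$). I would then bound visits by $\omega_{kk}(1+d_k)\le 2(1+d_k)/(2+d_k)<2$, using the diagonal bound $\omega_{ii}\le 2/(2+d_i)$ stated in the preliminaries. Each node added to $\phi$ then costs $O(1)$ expected visits, because once it is in $V_\phi$ any further hit absorbs immediately.

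For the range of $r$, the lower bound $r\ge1$ is immediate: each start node $v_i\in\SS$ lies on its own loop-erased path, so $\SS\subseteq V_\phi$ and $|V_\phi|\ge p$. For the upper bound, I would compare \textsc{PFS} with $p$ independent absorbing walks from the same starting nodes. The nodes in $V_\phi$ are contained in the union of the trajectories, and a walk in \textsc{PFS} is stopped no later than the corresponding unrestricted absorbing walk, since it is absorbed early upon hitting $V_\phi$. Coupling the coin flips, each loop-erased path has at most as many new nodes as the length of the independent walk from that node, plus one for the starting node. Lemma~\ref{le-absorbedrw} gives expected walk length $\bar d$ for a uniformly random start, so $\mathbb{E}|V_\phi|\le p\,\bar d$ up to the additive start node. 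The additive $+1$ should be absorbable, since a walk of length $t$ visits at most $t$ new nodes beyond its start, and a start already in $V_\phi$ contributes none. This yields $r\le\bar d$.

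The efficiency claim then follows from the same coupling. Under shared randomness, every \textsc{PFS} trajectory is a prefix of the corresponding independent absorbing walk. So the total work of \textsc{PFS} is at most that of running the $p$ walks separately, which by Lemma~\ref{le-absorbedrw} costs $p\bar d$ in expectation per sample, with strict savings whenever walks hit earlier parts of the forest.
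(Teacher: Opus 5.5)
\textbf{Verdict.} There is a genuine gap in the running-time step, the one you flagged as the main obstacle, and it cannot be closed. The paper's proof has the same hole. Your argument for $r\le\bar d$ also has a separate error.

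\textbf{The running-time bound.} Your plan follows the paper's own route. You extend $\phi$ to a full forest $\phi'$ via Wilson's order-independence, then charge each node Marchal's expected visit count $\omega_{kk}(1+d_k)\le 2(1+d_k)/(2+d_k)<2$.

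Summing this per-node bound over $V_\phi$ ignores nodes that are visited only on erased loops. The work is $\sum_k \mathbb{E}[\text{visits to }k]$ over \emph{all} nodes. ``Each node added to $\phi$ costs $O(1)$'' says nothing about nodes that are never added.

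In fact, $O(\mathbb{E}|V_\phi|)$ per sample is false in general. Take $\calG=K_n$ and $p=2$.
\begin{itemize}
\item The first walk starts with $V_\phi=\emptyset$, so it runs until the coin absorbs it. That is a geometric number of moves with mean $n-1$.
\item By Wilson's algorithm, $V_\phi$ is the union of the paths from $v_1,v_2$ to the sink in a uniform spanning tree of $K_{n+1}$ (the graph plus a sink joined to every node). These paths have expected length $\Theta(\sqrt n)$.
\end{itemize}
So $rp=O(\sqrt n)$, while the expected time is at least $n-1$.

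The paper's proof makes the same move. Marchal's identity counts visits in a \emph{complete} run of Wilson's algorithm. Summing it over all $n$ nodes gives $O(n)$ for $\phi'$, but restricting the sum to $V_\phi$ is unjustified.

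What your prefix coupling in the last paragraph does prove rigorously is the per-sample bound $O\bigl(\sum_{i=1}^p(1+\sum_j\omega_{v_i j}d_j)\bigr)$. For random $\SS$ this is $O(p(1+\bar d))$ in expectation. That is the defensible replacement for $O(rp)$. It is strictly weaker exactly when $r\ll\bar d$, which is the regime reported in the experiments.

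\textbf{The bound $r\le\bar d$.} The same coupling gives $|V_\phi|\le\sum_i(1+t_i)$, where $t_i$ is the number of moves of the independent walk from $v_i$. Hence it gives only $r\le 1+\bar d$.

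The $+1$ cannot be absorbed. A start node not already in $V_\phi$ is itself new. If the walk visits distinct fresh nodes and the coin absorbs it at the last one, all $t_i$ further nodes are new too.

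The sharper inequality actually fails without extra assumptions. Take a perfect matching on $n$ nodes, so $\bar d=1$. Each loop-erased path has one node with probability $\omega_{vv}=2/3$ and two nodes with probability $1/3$. With $p\ge 2$ fixed and $n\to\infty$, this gives $r\to 4/3>\bar d$.

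The paper's proof only asserts $r=O(\bar d)$. Your bound $r\le 1+\bar d$ does deliver that whenever $\bar d\ge 1$.

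\textbf{What holds.} Your lower bound $r\ge 1$ (every start node lies in $V_\phi$) is correct. So is the pathwise prefix coupling behind the efficiency comparison with independent walks. Both are more explicit than the paper's bare appeal to Lemma~\ref{le-absorbedrw}.
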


\begin{proof}
	The algorithm repeats the sampling process $l$ times and outputs a list of partial rooted forests. To establish the expected time complexity, consider a single partial rooted forest $\phi$ in the list $L$. Let $q  = rp$ represent the expected number of nodes in $\phi$. We will show that the expected time complexity for generating $\phi$ is $O(q)$.
	
	First, we extend the partial rooted forest $\phi$ to a complete spanning rooted forest $\phi'$ by executing the procedure described in lines 4–16 of Algorithm~\ref{alg:partial-rooted-forest} for the remaining nodes in $V \setminus V_\phi$. According to~\cite{SuZh23}, the expected time complexity of sampling the complete rooted forest $\phi'$ is $O(n)$. Furthermore, Wilson~\cite{Wi96} demonstrates that the order of node processing does not affect the generation of a spanning rooted forest. Hence, the partial rooted forest $\phi$ can be viewed as a subgraph of one complete rooted forest $\phi'$.

	Next, we analyze the expected time complexity of generating the partial rooted forest $\phi$. Marchel~\cite{Ma00} shows that the expected absorption time for each node in a rooted forest is $\omega_{ii}(1 + d_i)$. Sun~\cite{SuZh23} refines this result, proving that $\omega_{ii}(1 + d_i) \leq \frac{2(1 + d_i)}{2 + d_i} \leq 2$, which is $O(1)$. 

	Therefore, the expected time complexity for sampling a single partial rooted forest $\phi$ is $O(q)$, leading to a total expected time complexity of $O(ql)$ for the algorithm.

	To give a rough estimation of $r$, if we sample $\SS$ randomly from $V$, an upper bound of $r$ is $O( \bar{d})$ according to Lemma~\ref{le-absorbedrw}.  Moreover, Algorithm~\ref{alg:partial-rooted-forest} retains only the necessary branches in the forest for each node. The sampling process terminates as soon as the walk reaches a node already in $V_\phi$, making it more efficient than performing absorbing random walks for all nodes in $\SS$ individually, as described in Subsection~\ref{sec-absorbing}. 
\end{proof}

\section{Fast Sampling Method  for Opinion-Based Quantities in FJ  Model}

\subsection{Definitions of  Opinion-Based Quantities}
The FJ model includes several important quantities for analyzing the properties of social groups. Below, we provide a concise overview of these measures, following prior works~\cite{MuMuTs18,MaTeTs17,XuBaZh21}. 

Consider an undirected graph  $\calG = (V,E)$. 
 Disagreement $D(\calG)$ is a measure of the difference in opinions between neighbors, calculated as $D(\calG)  = \sum_{(i,j)\in E} (z_i-z_j)^2 = \zz^\top\LL\zz$.  Polarization  $P(\calG)$ reflects how far individual opinions deviate from the average-expressed opinion. Using the vector ${\boldsymbol{\mathit{\bar{z}}}} = \zz - \frac{\zz^\top\boldsymbol{1}}{n}\boldsymbol{1}$, polarization is defined as $ P(\calG)  = \sum_{i\in V} (z_i-\bar{z})^2 = {\boldsymbol{\mathit{\bar{z}}}} ^\top{\boldsymbol{\mathit{\bar{z}}}}$. 
Internal conflict $I(\calG)$ quantifies the discrepancy between the expressed opinions $\zz$ and the initial opinions $\sss$, given by $I(\calG)  = \sum_{i\in V} (z_i-s_i)^2 = \zz^\top\LL^2\zz$. Controversy $C(\calG)$ represents the overall magnitude of expressed opinions $ C(\calG)  = \sum_{i\in V} z_i ^2 = \zz^\top \zz$. 
 Disagreement-controversy $DC(\calG)$  is a combined measure capturing both disagreement and controversy $DC(\calG)  = D(\calG)+C(\calG)=\sum_{i\in V} s_iz_i = \sss^\top \zz$.

\subsection{Estimating Opinion-Based  Quantities}

In this subsection, we present a sampling-based algorithm for QE (Opinion-Based \underline{Q}uantities \underline{E}stimation). The algorithm leverages the partial rooted forest samplings method to estimate the opinion-based quantities efficiently. We first provide a lemma to show that the map between the nodes in  $\SS$ to its root nodes in the partial rooted forest can be used to estimate the opinion-based quantities.
\begin{lemma}\label{le-estimate}
	For an undirected graph $\calG = (V, E)$, a set of  nodes $\SS \subset V$, and internal opinion vectors $\sss = (s_1,\cdots,s_n)^\top$, let $\phi$ be the partial rooted forest generated by Algorithm \textsc{PFS} . For node $i\in \SS$, let $r_{\phi}(i)$ denote the root node of $i$ in $\phi$. Then, the estimator $\widehat{z}_i = \sum_{j=1}^n \mathbb{I}_{\{r_{\phi}(i)=j\}}s_j$ is an unbiased estimator of $z_i$.
\end{lemma}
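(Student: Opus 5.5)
The estimator $\widehat{z}_i=\sum_{j}\mathbb{I}_{\{r_{\phi}(i)=j\}}s_j$ has expectation
\[
\mathbb{E}[\widehat z_i]=\sum_j \Pr[r_\phi(i)=j]\,s_j .
\]
Since $z_i=\sum_j\omega_{ij}s_j$, it suffices to prove $\Pr[r_\phi(i)=j]=\omega_{ij}$ for every $i\in\SS$ and $j\in V$. The plan is to reduce the partial forest to a complete spanning rooted forest and then use the known root distribution of Wilson-type forest sampling.

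\textbf{Step 1 (extension).} As in the proof of Theorem~\ref{th-partial-rooted-forest}, run the remaining steps of the \textsc{PFS} procedure from the nodes in $V\setminus V_\phi$ to extend $\phi$ to a spanning rooted forest $\phi'$. The extension only attaches new branches to nodes already in $V_\phi$, or creates new trees. So it does not change the tree containing any $i\in\SS$, and $r_{\phi'}(i)=r_\phi(i)$.

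\textbf{Step 2 (order independence).} By Wilson's order-independence result~\cite{Wi96}, as cited in the proof of Theorem~\ref{th-partial-rooted-forest}, $\phi'$ has the same law as a spanning rooted forest generated by the loop-erased absorbing walk procedure with any node order. In particular, we may process $i$ first.

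\textbf{Step 3 (root distribution).} When $i$ is processed first, $V_\phi$ is empty. The walk from $i$ is then exactly the absorbing random walk of Subsection~\ref{sec-absorbing}, and its absorption point is $r_{\phi'}(i)$, because loop erasure preserves the endpoint. By the series expansion $\mathbf{\Omega}=\sum_{k\ge0}\PP^k(\II+\DD)^{-1}$, this absorption point equals $j$ with probability $\omega_{ij}$. Hence $\Pr[r_\phi(i)=j]=\omega_{ij}$, and linearity of expectation gives $\mathbb{E}[\widehat z_i]=z_i$.

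An alternative route to Step 3 is the matrix-forest theorem: a random spanning rooted forest drawn with probability proportional to weight has $\Pr[r(i)=j]=\omega_{ij}$~\cite{ChSh97}. Wilson's algorithm with absorption probabilities $1/(1+d_k)$ samples exactly this distribution.

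\textbf{Main obstacle.} The delicate point is Step 2. The order-independence claim must be applied to the joint law of the forest, not just its shape, so that moving $i$ to the front is legitimate even when $i$ is not the first node of $\SS$. The standard way to make this rigorous is the stack-of-arrows (cycle-popping) representation: pre-assign to each node an i.i.d.\ stack of ``absorb or move to neighbor'' instructions, and show that the resulting forest is a deterministic function of the stacks regardless of the processing order.
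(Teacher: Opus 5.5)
Your proof is correct. Its first half is exactly the paper's: reduce to $\Pr[r_\phi(i)=j]=\omega_{ij}$, then extend $\phi$ to a spanning rooted forest $\phi'$ with $r_{\phi'}(i)=r_\phi(i)$. The two routes diverge only at the last step.

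The paper applies the Matrix Forest Theorem directly to $\phi'$, i.e.\ it uses that $\omega_{ij}$ is the fraction of spanning rooted forests in which $i$ lies in the tree rooted at $j$. That tacitly requires $\phi'$ to be a uniformly random spanning rooted forest, which is Wilson's theorem applied to the graph augmented by a sink joined to every node; the paper leaves this implicit.

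You instead use only the cycle-popping (order-independence) part of Wilson's argument to move $i$ to the front. Then $r_{\phi'}(i)$ is the absorption point of an unobstructed absorbing walk from $i$, and you read off its law from the expansion $\mathbf{\Omega}=\sum_{k\ge 0}\PP^k(\II+\DD)^{-1}$ of Subsection~\ref{sec-absorbing}.

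What your route buys:
\begin{itemize}
\item It avoids forest counting entirely.
\item It stays within the random-walk interpretation the paper already develops.
\item It makes explicit why such an argument is needed at all: in \textsc{PFS} the walk from $i$ can be cut short on hitting branches built from earlier nodes of $\SS$, so its stopping point is not itself a plain absorbing-walk endpoint.
\end{itemize}
The paper's route is a shorter appeal to the standard identity from the forest-sampling literature it builds on. Yours uses less machinery and is closer to self-contained.

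One wording nit: in Step~1 the extension can enlarge the tree containing $i$. What it leaves unchanged is the root and $i$'s path to it, which is all you actually use.
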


\begin{algorithm}[h!]
\caption{ $\textsc{PF-QE}(\calG,\SS,\sss,L)$}
\label{alg:estimation}
\Input{Graph $\calG=(V,E)$, node set $\SS = \{v_1, \ldots, v_p\}$, internal opinion   $\sss = (s_1,\cdots,s_n)^\top$} 
\Output{Estimates of opinion-based quantities}
Sample a list $L$  of $l$ partial rooted forests  \;
Initialize $\widehat{\zz}_{\SS} \leftarrow \mathbf{0}$ (a $p$-dimensional vector for storing $\widehat{z}_i$ for each $v_i \in \SS$)\;

\For{$k \leftarrow 1$ \KwTo $l$}{
	$\phi \leftarrow L[k]$\;
	\For{$i \leftarrow 1$ \KwTo $p$}{
		$r_i \leftarrow r_{\phi}(v_i)$, \ 
		$\widehat{\zz}_{\SS}[i] \leftarrow \widehat{\zz}_{\SS}[i] + s_{r_i}$\;
	}
}
$\widehat{\zz}_{\SS} \leftarrow \frac{\widehat{\zz}_{\SS}}{l}$,  $\widehat{C} \leftarrow \frac{n}{p}\sum_{i=1}^p \widehat{\zz}_{\SS}[i]^2$,  $\widehat{DC} \leftarrow \frac{n}{p}\sum_{i=1}^p s_i\widehat{\zz}_{\SS}[i]$, $\widehat{D} \leftarrow \widehat{DC} - \widehat{C}$\\ 
$\widehat{\bar{z}}_{\SS} = \frac{1}{p}\sum_{i=1}^{p}\widehat{\zz}_{\SS}[i]$, $\widehat{P} \leftarrow \frac{n}{p}\sum_{i=1}^p (\widehat{\zz}_{\SS}[i] - \widehat{\bar{ z}}_{\SS})^2$, $\widehat{I} \leftarrow \frac{n}{p}\sum_{i=1}^p (\widehat{\zz}_{\SS}[i] - s_i)^2$\;
\Return $ \widehat{D}, \widehat{P}, \widehat{I}, \widehat{C}, \widehat{DC}$\;
\end{algorithm}   

Lemma~\ref{le-estimate} shows that the estimator $\widehat{z}_i$ is unbiased for $z_i$. By applying the partial rooted forest samplings method, we can estimate the opinion-based quantities in the FJ opinion dynamics model, and we now give the pseudocode for the sampling-based algorithm.

Algorithm~\ref{alg:estimation} PF-QE (\underline{P}artial Rooted \underline{F}orest Method for QE ) provides the pseudocode for estimating the opinion-based quantities in the FJ opinion dynamics model. The algorithm leverages the partial rooted forest samplings method to estimate the opinion-based quantities efficiently. Instead of estimating all nodes in $V$, algorithm~\ref{alg:estimation} focuses on a set of nodes $\SS$ with $p$ nodes, and samples $l$ partial rooted forests. Now we provide a theorem,  showing the parameters $p$ and $l$ for a given error tolerance.  

\begin{theorem}\label{th-estimation}
	For an undirected graph $\calG = (V, E)$ and internal opinion vectors $\sss = (s_1,\cdots,s_n)^\top$, if we choose $p = O(\frac{1}{\epsilon^2}\log\frac{1}{\delta})$, $l = O(\frac{1}{\epsilon^2}\log\frac{1}{\delta})$, and node set $\SS$ is randomly sampled from $V$, then Algorithm~\ref{alg:estimation}   estimates the opinion-based quantities $D(\calG)$, $P(\calG)$, $I(\calG)$, $C(\calG)$, and $DC(\calG)$ within an absolute error of $n\epsilon$ with probability at least $1-\delta$. The total time complexity of Algorithm~\ref{alg:estimation} is $O(rpl)$.
\end{theorem}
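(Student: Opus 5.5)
The proof splits the error into two independent sources: the sampling of the node set $\SS$ from $V$, and the Monte Carlo error from using $l$ partial rooted forests to estimate $z_i$ for $i\in\SS$. Each quantity is $n$ times an average over nodes of a bounded function of $(z_i,s_i)$:
\[
C=n\cdot\tfrac1n\sum_i z_i^2,\quad DC=n\cdot\tfrac1n\sum_i s_iz_i,\quad I=n\cdot\tfrac1n\sum_i (z_i-s_i)^2,
\]
with $D=DC-C$, and $P=\sum_i z_i^2-n\bar z^2$. The goal is to bound the additive error of the corresponding normalized averages by $O(\epsilon)$; rescaling $\epsilon$ by a constant then gives the claimed absolute error $n\epsilon$.

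\textbf{Step 1 (forest sampling error).} Fix $i\in\SS$. By Lemma~\ref{le-estimate}, each forest in $L$ gives an unbiased sample $s_{r_\phi(i)}\in[0,1]$ of $z_i$. The $l$ forests are generated independently, so $\widehat{\zz}_\SS[i]$ is an average of $l$ i.i.d.\ variables in $[0,1]$. Hoeffding's inequality gives $|\widehat{\zz}_\SS[i]-z_i|\le\epsilon$ with probability $1-\delta'$ when $l=O(\epsilon^{-2}\log(1/\delta'))$. A union bound over all $p$ nodes would cost an extra $\log p$ factor. To avoid this, I would bound the averaged quantity directly instead of each coordinate. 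For example, $\frac1p\sum_i s_i\widehat{\zz}_\SS[i]$ is an average of $pl$ independent bounded terms (independent across forests; within one forest they are dependent across $i$, so I would group by forest and apply Hoeffding to the $l$ i.i.d.\ per-forest averages). For the quadratic terms $C$, $I$, and $P$, I would use $|\hat z_i^2-z_i^2|\le 2|\hat z_i-z_i|$ together with a bound on $\frac1p\sum_i|\hat z_i-z_i|$. That average has expectation $O(1/\sqrt l)=O(\epsilon)$ and concentrates by McDiarmid's inequality over the $l$ independent forests, since each forest changes it by at most $1/l$. This gives an $O(\epsilon)$ error with probability $1-\delta/2$.

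\textbf{Step 2 (node sampling error).} Conditional on exact $z_i$, each normalized quantity is the empirical mean of a $[0,1]$- or $[-1,1]$-bounded function over $p$ uniformly random nodes. The bounds hold because $z_i,s_i\in[0,1]$, as $\mathbf\Omega$ is stochastic. Hoeffding again gives error $\epsilon$ with probability $1-\delta/2$ for $p=O(\epsilon^{-2}\log(1/\delta))$; sampling without replacement only helps. For $P$, I would write $\frac1n P=\frac1n\sum z_i^2-\bar z^2$ and control both $\frac1n\sum z_i^2$ and $\bar z$. The error in the squared mean is at most $2|\widehat{\bar z}-\bar z|$. Combining the two steps by the triangle inequality, adjusting constants, and taking a union bound over the five quantities gives the accuracy claim.

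\textbf{Step 3 (runtime).} Sampling $L$ costs $O(rpl)$ by Theorem~\ref{th-partial-rooted-forest}. The loops in the algorithm cost $O(pl)$, which is dominated because $r\ge1$.

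\textbf{Main obstacle.} I expect the main difficulty to be the dependence between the estimates $\hat z_i$ for different $i$ within the same forest, because all nodes of $\SS$ share one sampled forest. The remedy is to treat each forest as a single independent sample and apply the bounded-differences inequality over the $l$ forests, rather than relying on independence across nodes.
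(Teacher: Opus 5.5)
Your proposal is correct. Its skeleton matches the paper's proof: unbiasedness from Lemma~\ref{le-estimate}, Hoeffding over the $l$ independent forests, a separate concentration bound for replacing $V$ by the random set $\SS$, and the running time from Theorem~\ref{th-partial-rooted-forest}. For the node-sampling step the paper invokes an external sampling lemma, while you use Hoeffding for sampling without replacement.

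The real difference is in how the forest-sampling error is aggregated. The paper applies Hoeffding to each $\widehat{z}_i$ separately, getting $|\widehat{z}_i-z_i|\le\epsilon$ with probability $1-\delta$ for each node. It then simply asserts that, since $z_i\in[0,1]$, all five quantities are within $n\epsilon$. It leaves implicit how the $p$ per-node guarantees are combined; a union bound would give $l=O(\epsilon^{-2}\log\frac{p}{\delta})$, not the stated sample size. It also never treats the quadratic terms or the centering in $P$ explicitly.

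Your route instead treats each forest as a single sample. You use Hoeffding on the per-forest averages for the linear quantities $DC$ and $\bar z$. For $C$, $I$, and $P$ you use the bounded-differences inequality together with $\mathbb{E}|\widehat{z}_i-z_i|\le 1/(2\sqrt{l})$. This handles the dependence among nodes of $\SS$ within one forest and yields $l=O(\epsilon^{-2}\log\frac1\delta)$ with no $\log p$ factor.

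The paper's argument is shorter and leans on citations. Yours is longer but self-contained, and it actually establishes the sample size as stated in the theorem.
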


\section{Partial Forest Sampling Techniques for Optimization Problems in FJ model}

\subsection{ Opinion Minimization Problem}

\begin{tcolorbox}[boxrule={1pt}]
	\begin{problem}\label{Pr-opt}[\underline{{Op}}inion \underline{{Min}}imization Problem (OpMin)]
		Given an undirected graph $ \calG = (V,E) $, a parameter vector $ \cc = (c_1,\cdots, c_n)^\top$, where $c_i \in [0,1]$, an integer $k\ll n$, and an iternal opinion vecter $\sss$, we aim to find the set $ H \subseteq V $ of $k$ nodes,  and set their internal opinions to $0$,  so that the function  $f(\cc,H) = \frac{1}{n} \cc^\top\zz = \frac{1}{n}\sum_{i=1}^n c_iz_i$ is minimized. That is,
			\begin{equation}\label{EOMi}
				H =  \argmin_{U \subseteq V, |U|= k} f(\cc,U).
			\end{equation}
	\end{problem}
\end{tcolorbox}
Our formulation of Problem~\ref{Pr-opt} generalizes the setting in~\cite{SuZh23}. Specifically, when $c_i = 1$ for all $i \in V$, it reduces to the average-expressed opinion minimization problem studied in~\cite{SuZh23}. To address this problem efficiently, we propose a fast algorithm based on partial rooted forest samplings, which reduces the time complexity from $O(ln)$ to $O(rpl)$.

Recall that $\zz = (\II+\LL)^{-1}\sss$ is the equilibrium vector of the expressed opinions. Then $z_i = \sum_{j=1}^n \omega_{ij}s_j$. We can rewrite the objective function as $f(\cc,H) = \frac{1}{n}\sum_{i=1}^n c_iz_i = \frac{1}{n}\sum_{i=1}^n c_i\sum_{j=1}^n \omega_{ij}s_j = \frac{1}{n}\sum_{j=1}^n (\sum_{i=1}^n c_i\omega_{ij} )s_j $. Define $\gamma_j= \frac{1}{n}\sum_{i=1}^n c_i\omega_{ij}$, then the objective function can be rewritten as $f(\cc,H) = \sum_{j=1}^n \gamma_js_j$. In this case, the objective function is linear with respect to the internal opinions $s_i$. To solve OpMin, we need to find the set $H$ of $k$ nodes with the smallest $\gamma_js_j$ values.  

In~\cite{SuZh23}, the authors use  the forest sampling method to solve the  problem. However, the algorithm needs to perform a random walk process across all nodes in the graph, which is computationally expensive.  To overcome this, we propose an efficient method using the partial rooted forest samplings. The key to solving OpMin lies in estimating   the $\gamma_j = \frac{1}{n}\sum_{i=1}^n c_i\omega_{ij}$  for all nodes in the graph. Suppose that we choose a set of $p$ nodes $\SS$ randomly from $V$, and sample a partial rooted forest $\phi$. Then, define the estimator $\widehat{\gamma}_j(\phi) = \frac{1}{p}\sum_{i\in \SS} c_i\mathbb{I}_{\{r_{\phi}(i)=j\}}$.  We sample $l$ partial rooted forests $\phi_1,\cdots,\phi_l$, and estimate the $\gamma_j$ values using $\widehat{\gamma}_j = \frac{1}{l}\sum_{k=1}^l \widehat{\gamma}_j(\phi_k)$.  We detail the algorithm for solving OpMin in Algorithm~\ref{alg:optimization}.
 
\begin{algorithm}[htbp!]
\caption{$\textsc{PF-OpMin}(\calG,\SS,\cc,l,k)$}
\label{alg:optimization}
\Input{Graph $\calG=(V,E)$, node set $\SS = \{v_1, \ldots, v_p\}$, parameter vector $\cc = (c_1,\cdots,c_n)^\top$, number of forests $l$, number of nodes $k$}
\Output{Optimal set $\widehat{H}$}
Initialize $\widehat{\gamma} \leftarrow \mathbf{0}$, $\widehat{H} \leftarrow \emptyset$.   Sample a list $L$  of $l$ partial rooted forests  \;
\For{$t \leftarrow 1$ \KwTo $l$}{
	$\phi \leftarrow$ $L[t]$\;
	\For{$i \leftarrow 1$ \KwTo $p$}{
		$j \leftarrow r_{\phi}(v_i)$, \quad	$\widehat{\gamma}[j] \leftarrow \widehat{\gamma}[j] + c_{v_i}/lp$\;
	}
}
\For{$i \leftarrow 1$ \KwTo $k$}{
	$j \leftarrow \argmax_{v \in V \setminus H} \widehat{\gamma}[v] s_v$, \quad $\widehat{H} \leftarrow \widehat{H}\cup \{j\}$\;
}
\Return $\widehat{H}$\;
\end{algorithm}   

Algorithm~\ref{alg:optimization} \textsc{PF-OpMin}(\underline{P}artial Rooted \underline{F}orest Method for \underline{OpMin}) provides the pseudocode for solving OpMin. The algorithm leverages the partial rooted forest samplings method to estimate the $\gamma_j$ values efficiently. By sampling $l$ partial rooted forests, we estimate the $\gamma_j$ values using the estimator $\widehat{\gamma}_j$. The algorithm then selects the $k$ nodes with the smallest $\widehat{\gamma}_js_j$ values as the optimal set $\widehat{H}$. 

Consider an undirected graph $\calG = (V, E)$ and a parameter vector $\cc = (c_1, \cdots, c_n)^\top$, where each $c_i$ is a constant. By setting the parameters $p = O\left(\frac{1}{\epsilon^2}\log\frac{1}{\delta}\right)$ and $l = O\left(\frac{1}{\epsilon^2}\log\frac{1}{\delta}\right)$, and selecting the node set $\SS$ through random sampling from $V$, Algorithm~\ref{alg:optimization} can solve OpMin within an absolute error of $k\epsilon$, with probability at least $1 - \delta$. Specifically, the approximation satisfies $f(\cc, H) - f(\cc, \widehat{H}) \leq k\epsilon$. This result follows a proof strategy analogous to that of Theorem~\ref{th-estimation} and Theorem 5.5 in~\cite{SuZh23},  where $r$ is the ratio of the expected number of nodes in a partial rooted forest $\phi \in L$ to $p$. Our algorithm runs in $O(rpl)$ time and outperforms the state-of-the-art method~\cite{SuZh23}, which requires $O(ln)$ time. Since $p \ll n$ and, as shown in the experimental section, the empirical results on real-world networks indicate that $r < 15$, our approach improves the time complexity from linear to sublinear.


\subsection{Polarization and Disagreement Minimization problem}
In this subsection, we consider the Polarization and Disagreement Minimization Problem proposed in~\cite{ZhBaZh21}. The problem focuses on selecting a set of edges not present in the original graph to minimize the sum of polarization and disagreement. We define the P-D index as $\mathcal{I}(\calG) = D(\calG) + P(\calG) = \bar{\sss}^\top \mathbf{\Omega} \bar{\sss}$, and denote the augmented graph as $\calG + T = (V, E \cup T)$. The problem is formally described as follows:

\begin{tcolorbox}[boxrule={1pt}]
	\begin{problem}\label{Pr-pd}[\underline{P}olarization and \underline{D}isagreement \underline{Min}imization Problem (PDMin)]
		Given an undirected graph $ \calG = (V, E) $ and a candidate edge set $ E_C \subseteq \binom{V}{2} \setminus E $, the goal is to select a subset $ T \subseteq E_C $ of $k \ll n$ edges to add to the graph, such that the P-D index of the new graph is minimized. Define the objective function as $f(T) = \mathcal{I}(\calG) - \mathcal{I}(\calG + T)$, and find
		\begin{equation}\label{PDMP}
			T = \argmax_{T \subseteq E_C,\ |T| = k} f(T).
		\end{equation}
	\end{problem}
\end{tcolorbox}
In~\cite{ZhBaZh21}, the authors show that the problem is combinatorial in nature and propose a greedy algorithm to solve it. They prove that for a candidate edge $e \in E_C$ connecting nodes $u$ and $v$, with   vector $\bb_e = \ee_u - \ee_v$, the marginal gain is given by $f(e) = \frac{\sss^\top \mathbf{\Omega} \bb_e \bb_e^\top \mathbf{\Omega} \sss}{1 + \bb_e^\top \mathbf{\Omega} \bb_e} = \frac{(z_u - z_v)^2}{1 + r_{uv}} \geq 0$, where $r_{uv}$ is the forest distance between nodes $u$ and $v$, defined as $ r_{uv} \triangleq \bb_{uv}^\top \mathbf{\Omega} \bb_{uv}$. A greedy algorithm computes the marginal gain for each candidate edge in $E_C$, selects the one with the largest gain, and repeats this process $k$ times to obtain the final set $T$.

We set the sample node set for partial rooted forests to be the set of all nodes that appear in the candidate edge set $E_C$, that is,   $\SS = \bigcup_{(i, j) \in E_C} \{i, j\}$. Then, we generate $l$ partial rooted forests $\{\phi_1, \cdots, \phi_l\}$ and use them to estimate the marginal gain $f(e)$ for each candidate edge $e \in E_C$. The detailed procedure is presented in Algorithm \textsc{PF-PDMin}(\underline{P}artial Rooted \underline{F}orest Method for \underline{PDMin}).  For any candidate edge $e$, if we set $l = O\left(\frac{1}{\epsilon^2} \log \frac{1}{\delta}\right)$,   $|\widehat{f}(e) - f(e)| < \epsilon$ is guaranteed to hold with probability at least $1 - \delta$, by applying Hoeffding's inequality~\cite{Ho94}.  In~\cite{ZhBaZh21}, the authors use the Johnson–Lindenstrauss Lemma~\cite{JoLi84} and a fast Laplacian solver~\cite{SpTe14} to approximate the marginal gain, resulting in an overall time complexity of $\widetilde{O}(mk)$. In contrast, we leverage partial rooted forest samplings to reduce the time complexity to $O(r|E_C|kl)$.


\begin{algorithm}[t!]
\caption{\textsc{PF-PDMin}($\calG, E_C, \sss, l, k$)}
\label{alg:mpdp}
\Input{Graph $\calG = (V, E)$, candidate edge set $E_C$, internal opinion vector $\sss = (s_1,\cdots,s_n)^\top$, number of forests $l$, number of edges $k$}
\Output{Selected edge set $\widehat{T}$}
Initialize $\widehat{T} \leftarrow \emptyset$,    $\SS \leftarrow \bigcup_{(i,j) \in E_C} \{i, j\}$ \;
\For{$t \leftarrow 1$ \KwTo $k$}{
    Sample $l$ partial rooted forests $\{\phi_1, \ldots, \phi_l\}$ on node set $\SS$ \;
    \ForEach{$(u,v) \in E_C \setminus \widehat{T}$}{
        Initialize $\widehat{z}_u \leftarrow 0$, $\widehat{z}_v \leftarrow 0$, $\widehat{r}_{uv} \leftarrow 0$ \;
        \For{$i \leftarrow 1$ \KwTo $l$}{
            $r_u \leftarrow r_{\phi_i}(u)$,\quad $r_v \leftarrow r_{\phi_i}(v)$, \quad
            $\widehat{z}_u \leftarrow \widehat{z}_u + s_{r_u}$,\quad $\widehat{z}_v \leftarrow \widehat{z}_v + s_{r_v}$ \;
            $\widehat{r}_{uv} \leftarrow \widehat{r}_{uv} + \mathbb{I}_{\{u = r_u\}} + \mathbb{I}_{\{v = r_v\}} - \mathbb{I}_{\{u = r_v\}} - \mathbb{I}_{\{v = r_u\}}$ \;
        }
        $\widehat{z}_u \leftarrow \widehat{z}_u / l$,\quad $\widehat{z}_v \leftarrow \widehat{z}_v / l$,\quad $\widehat{r}_{uv} \leftarrow \widehat{r}_{uv} / l$, \quad
        $\widehat{f}(u,v) \leftarrow {(\widehat{z}_u - \widehat{z}_v)^2} /({1 + \widehat{r}_{uv}})$ \;
    }
    $(u^*,v^*) \leftarrow \argmax_{(u,v) \in E_C \setminus T} \widehat{f}(u,v)$ \;
    $\widehat{T} \leftarrow \widehat{T} \cup \{(u^*,v^*)\}$,\quad $E \leftarrow E \cup \{(u^*,v^*)\}$, \quad $V \leftarrow V \cup \{u^*,v^*\} $ \;
}
\Return $T$ \;
\end{algorithm}

\section{Experiments}
In this section, we conduct extensive experiments on various real-life networks in order to evaluate the performance of our algorithms, in terms of accuracy and efficiency.  Our source code is publicly available on \url{https://github.com/HaoxinSun98/FJ-PF}.

 \textbf{Datasets and Equipment.}
 The datasets for chosen real networks are  accessed publicly through KONECT~\cite{Ku13} and SNAP~\cite{LeSo16}. Our experiments cover a varied selection of real-world networks, and the specifics of these datasets are outlined in Table~\ref{datasets}.  All experiments are carried out using the Julia programming language within a computational environment equipped with a 2.5 GHz Intel E5-2682v4 CPU and 256GB of primary memory.
 \begin{table*}[htbp!]\fontsize{8}{11}\setlength{\tabcolsep}{10pt} 
\caption{Statistics of the datasets used in the experiments, including the number of nodes, number of edges, average degree, and the parameter $r$, which denotes the ratio of the average number of nodes in a partial rooted forest.}\label{datasets}\centering 
	\begin{tabular}{ccccc}
	\hline
	\multirow{2}{*}{Network} & \multirow{2}{*}{Nodes} & \multirow{2}{*}{Edges} & \multirow{2}{*}{ $\bar{d}$} & \multirow{2}{*}{$r$} \\
							 &                        &                        &                            &                      \\ \hline
	Delicious                & 536,108                & 1,365,961              & 5.1                        & 2.4                  \\
	Youtube                  & 1,134,890              & 2,987,624              & 5.3                        & 2.4                  \\
	Pokec                    & 1,632,803              & 22,301,964             & 27.3                       & 9.0                 \\
	Orkut                    & 3,072,441              & 117,184,899            & 76.3                       & 13.6                \\
	Livejournal              & 7,489,073              & 112,305,407            & 30.0                       & 3.3                \\
	Twitter                  & 41,652,230             & 1,202,513,046          & 57.7                       & 7.7                  \\ \hline
	\end{tabular}
\end{table*}


\textbf{Baselines.}
For the FJ quantities estimation problem, we compare our proposed algorithms with the LapSolver~\cite{XuBaZh21} and LazyWalk~\cite{NeDoPe24}. For OpMin, we compare our algorithm with the forest sampling method \textsc{Fast}~\cite{SuZh23}. For PDMin, we compare our proposed algorithms with the greedy algorithm \textsc{FastGreedy}~\cite{ZhBaZh21}.

\begin{table*}[htbp!]\caption{Time complexity of our algorithms and baselines for three problems with parameter settings.}
\renewcommand{\arraystretch}{1.3}
\setlength{\tabcolsep}{2.2pt} 
\centering
\small
\begin{tabular}{cccccccccc}
\hline
\multicolumn{2}{c}{QE}                &  & \multicolumn{2}{c}{OpMin}                &  & \multicolumn{2}{c}{PDmin}                   &  & Settings                          \\ \cline{1-2} \cline{4-5} \cline{7-10} 
$\textsc{PF-QE}$ & $O(rpl)$           &  & $\textsc{PF-OpMin}$ & $O(rpl)$           &  & $\textsc{PF-PDmin}$   & $O(rkl|E_C|)$       &  & $r < 15$ , $l = 10^3$, $p = 10^4$ \\
LazyWalk         & $O(r'pl')$         &  & $\textsc{Fast}$     & $O(ln)$            &  & $\textsc{FastGreedy}$ & $\widetilde{O}(mk)$ &  & $r' = 4\times10^3$, $l' = 600$    \\
LapSolver        & $\widetilde{O}(m)$ &  & $\textsc{Exact}$    & $\widetilde{O}(m)$ &  & ---                   & ---                 &  & $k=50$, $ |E_C| = 10^4$           \\ \hline
\end{tabular}
\end{table*}

 \subsection{ Estimation for Relevant Quantities}
In this subsection, we evaluate the performance of our proposed algorithms for estimating relevant quantities for FJ model on real-world networks. We compare our algorithms with LapSolver~\cite{XuBaZh21} and LazyWalk~\cite{NeDoPe24}. LapSolver achieves high accuracy~\cite{XuBaZh21} with a $10^{-6}$ relative error and is treated as the ground truth. Following the setting in~\cite{NeDoPe24}, we set the number of samples to $p = 10,000$ and vary the sampling parameters to compare the running time and mean relative error of the expressed opinions computed by PF-QE and LazyWalk. The results are shown in Figure~\ref{f1}. Both PF-QE and LazyWalk are sampling-based algorithms and are parallelized using 8 threads. Each experiment is repeated 10 times, and the average results are reported.
\begin{figure*}[htbp!]
	\centering
	\includegraphics[width=1\columnwidth]{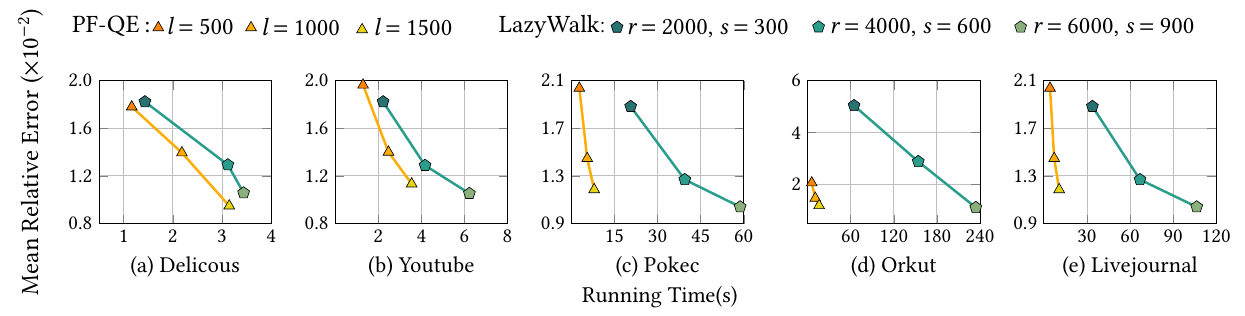}
	\caption{Comparison of mean relative error for $z_i$ and running time under varying parameters for PF-QE and LazyWalk. Internal opinions are generated using the uniform distribution.}\label{f1}	
\end{figure*}

From Figure~\ref{f1}, we observe that the PF-QE curves consistently lie below and to the left of those for LazyWalk, indicating that PF-QE achieves better accuracy and lower running time. Based on this observation, we fix $l = 1000$ for PF-QE, and for LazyWalk, we use 4,000 walks with 600 steps, following the configuration in~\cite{NeDoPe24}. For other quantities, we maintain 8-thread parallelization, repeat each experiment 10 times, and present the averaged results in Table~\ref{FJ-estimation}.

\begin{table*}[htbp!]
\small
\renewcommand{\arraystretch}{1.1}
\setlength{\tabcolsep}{2.5pt} 
\caption{Mean relative errors and running time  for three algorithms on six networks. Internal opinions are generated using the uniform distribution.}\label{FJ-estimation}
\begin{tabular}{clccclllllllllllllllllllll}
\cline{1-17}
\multirow{3}{*}{Network} &  & \multicolumn{3}{c}{Time(s)} &  & \multicolumn{11}{c}{Reletive Error in $\%$} &  &  &  &  &  &  &  &  &  \\ \cline{3-5} \cline{7-17}
 &  & \multirow{2}{*}{LapSolver} & \multirow{2}{*}{LazyWalk} & \multirow{2}{*}{PF-QE} &  & \multicolumn{5}{c}{LazyWalk} &  & \multicolumn{5}{c}{PF-QE} & \multicolumn{1}{c}{} & \multicolumn{1}{c}{} & \multicolumn{1}{c}{} & \multicolumn{1}{c}{} &  &  &  &  &  \\ \cline{7-11} \cline{13-17}
 &  &  &  &  &  & \multicolumn{1}{c}{$P$} & \multicolumn{1}{c}{$I$} & \multicolumn{1}{c}{$C$} & \multicolumn{1}{c}{$DC$} & \multicolumn{1}{c}{$D$} &  & \multicolumn{1}{c}{$P$} & \multicolumn{1}{c}{$I$} & \multicolumn{1}{c}{$C$} & \multicolumn{1}{c}{$DC$} & \multicolumn{1}{c}{$D$} & \multicolumn{5}{c}{} &  &  &  &  \\ \cline{1-1} \cline{3-5} \cline{7-11} \cline{13-17}
Delicious &  & 7.2 & 3.1 & 2.2 &  & 0.90 & 0.48 & 0.83 & 0.42 & 3.58 &  & 1.38 & 0.42 & 0.55 & 0.32 & 4.05 & \multicolumn{1}{c}{} & \multicolumn{1}{c}{} & \multicolumn{1}{c}{} & \multicolumn{1}{c}{} & \multicolumn{1}{c}{} &  &  &  &  \\
Youtube &  & 9.6 & 4.1 & 2.4 &  & 1.03 & 0.75 & 0.21 & 0.33 & 3.52 &  & 0.95 & 0.73 & 0.22 & 0.31 & 4.19 &  &  &  &  &  &  &  &  &  \\
Pokec &  & 49.8 & 39.5 & 5.3 &  & 1.83 & 0.93 & 0.25 & 0.50 & 11.01 &  & 2.01 & 0.96 & 0.21 & 0.51 & 10.63 &  &  &  &  &  &  &  &  &  \\
Orkut &  & 291.9 & 142.2 & 9.9 &  & 3.97 & 0.62 & 5.43 & 2.71 & 34.77 &  & 2.98 & 0.72 & 0.23 & 0.47 & 13.67 &  &  &  &  &  &  &  &  &  \\
Livejournal &  & 186.8 & 67.6 & 6.9 &  & 1.37 & 0.85 & 0.31 & 0.61 & 8.36 &  & 1.30 & 0.90 & 0.30 & 0.52 & 8.17 &  &  &  &  &  &  &  &  &  \\
Twitter &  & - & 160.6 & 32.3 &  & \multicolumn{1}{c}{-} & \multicolumn{1}{c}{-} & \multicolumn{1}{c}{-} & \multicolumn{1}{c}{-} & \multicolumn{1}{c}{-} &  & \multicolumn{1}{c}{-} & \multicolumn{1}{c}{-} & \multicolumn{1}{c}{-} & \multicolumn{1}{c}{-} & \multicolumn{1}{c}{-} &  &  &  &  &  &  &  &  &  \\ \cline{1-17}
\end{tabular}
\end{table*}


The results show that our algorithms achieve relative errors below $1.5\%$ for average expressed opinion and below $4\%$ for $P$, $I$, $C$, and $DC$. Errors for $D$ are higher due to its dependence on both $DC$ and $C$, amplifying estimation errors. PF runs faster than both LapSolver and LazyWalk while maintaining competitive accuracy. For the Twitter network, LapSolver cannot process the data due to time and memory constraints, whereas our algorithm handles it efficiently, running approximately five times faster than LazyWalk.

\subsection{  Optimization Problems}
In this subsection, we evaluate the performance of our proposed algorithms for solving the FJ optimization problems OpMin and PDMin on real-world networks.  We set   $l = 1000$, $p = 10000$, and $k=50$ for our algorithms. Since our algorithms are sampling-based, we repeat each experiment 10 times and report the average results to ensure robustness and consistency.

\begin{table}[htbp!]
\small
\centering
\renewcommand{\arraystretch}{1.1}
\setlength{\tabcolsep}{5pt}
\caption{Running time and effectiveness of \textsc{OpMin} (in terms of opinion decline $\theta$) and \textsc{PDMin} (in terms of P-D index decline $\delta$).}\label{FJ-estimation-shx}
\begin{tabular}{cccccccccllccccc}
\hline
\multirow{3}{*}{Network} & \multicolumn{1}{l}{} & \multicolumn{8}{c}{OpMin} &  & \multicolumn{5}{c}{PDMin} \\ \cline{3-10} \cline{12-16} 
 &  & \multicolumn{2}{c}{\textsc{PF-OpMin}} &  & \multicolumn{2}{c}{\textsc{Fast}} &  & \multicolumn{2}{c}{\textsc{Exact}} &  & \multicolumn{2}{c}{\textsc{PF-PDMin}} &  & \multicolumn{2}{c}{\textsc{FastGreedy}} \\ \cline{3-4} \cline{6-7} \cline{9-10} \cline{12-13} \cline{15-16} 
 &  & time & $\theta$ &  & time & $\theta$ &  & time & \multicolumn{1}{c}{$\theta$} &  & time & $\delta$ &  & time & $\delta$ \\ \cline{1-1} \cline{3-10} \cline{12-16} 
Delicious &  & 2.2 & -6.21 &  & 12.4 & -6.79 &  & 5.9 & -7.80 &  & 160.3 & -5.69 &  & 862.3 & -5.63 \\
Youtube &  & 3.7 & -6.56 &  & 33.2 & -6.48 &  & 4.6 & -7.79 &  & 154.3 & -6.46 &  & 1814.3 & -6.42 \\
Pokec &  & 6.3 & -4.17 &  & 79.4 & -4.13 &  & 60.2 & -4.98 &  & 494.1 & -3.18 &  & 18656 & -3.17 \\
Orkut &  & 10.3 & -1.88 &  & 163.1 & -1.79 &  & 310.5 & -2.43 &  & 774.2 & -5.23 &  & - & - \\
Livejournal &  & 8.1 & -6.56 &  & 296.8 & -6.44 &  & 214.7 & -7.12 &  & 436.5 & -4.62 &  & - & - \\
Twitter &  & 33.7 & - &  & 625.0 & - &  & - &  &  & 2531 & - &  & - & - \\ \hline
\end{tabular}
\end{table}

For OpMin, we compare our algorithm with the \textsc{Fast} forest sampling method \cite{SuZh23}. To ensure a fair comparison, we set the number of forests in \cite{SuZh23} to 1000, consistent with our experimental setup, and execute both algorithms in parallel using 8 threads. The \textsc{Exact} algorithm, which uses the Laplacian solver, serves as the baseline method.  Table~\ref{FJ-estimation-shx} reports the running time and the average expressed opinion decline $\theta$ after setting the internal opinions of 50 nodes to zero, comparing three methods: \textsc{PF-OpMin}, \textsc{Fast}, and \textsc{Exact}. The results show that our proposed algorithm, \textsc{PF-OpMin}, achieves comparable or even better effectiveness than \textsc{Fast}, while being significantly more efficient. For instance, on the LiveJournal network, \textsc{PF-OpMin} is approximately 37 times faster than \textsc{Fast}, while also achieving a slightly larger opinion decline.

For the PDMin problem, we compare our proposed algorithm \textsc{PF-PDMin} with the \textsc{FastGreedy} method~\cite{ZhBaZh21}, which utilizes the Johnson–Lindenstrauss Lemma~\cite{JoLi84} and a fast Laplacian solver~\cite{SpTe14}. In our experiments, we follow the parameter setting in~\cite{ZhBaZh21}, setting the size of the candidate edge set $E_C$ to $10^4$. And we use 20 iterations for the JL lemma.  For our method, we fix $k = 50$ and $l = 1000$. Table~\ref{FJ-estimation-shx} summarizes the running time and the reduction in polarization and disagreement, denoted by $\delta$, for both methods. The results demonstrate that \textsc{PF-PDMin} is substantially more efficient and consistently achieves better performance than \textsc{FastGreedy}. For instance, on the Pokec network, our algorithm achieves more than a 35$\times$ speedup while also yielding a greater decline in the polarization-disagreement index. Furthermore, \textsc{FastGreedy} fails to scale to the three largest networks due to time and memory constraints, whereas our method remains both efficient and scalable.

\section{Conclusions}
In this paper, we proposed several algorithms to compute and optimize opinion-based metrics for the Friedkin-Johnsen (FJ) model. We first introduced the concept of partial rooted forests and presented an efficient algorithm for computing several opinion-based quantities.      Our proposed algorithms, \textsc{PF-OpMin} and \textsc{PF-PDMin}, reduce  the time complexity from linear to sublinear compared to state-of-the-art methods.   Extensive experiments on real-world networks demonstrate  that our algorithms are both accurate and efficient, outperforming state-of-the-art methods and scaling effectively to large networks with over 20 million nodes.

Currently, our framework has limitations in handling optimization problems where the objective involves nonlinear terms such as squared components (e.g., $z_i^2$ in the marginal gain). In future work, we plan to extend our methods to support a broader class of opinion optimization problems under the FJ model.

\section*{Acknowledgements}
This work was supported by the National Natural Science Foundation of China (Nos. 62372112).

\bibliographystyle{plain} 
\bibliography{reference}

 \clearpage

\newpage
\section*{NeurIPS Paper Checklist}

\begin{enumerate}

\item {\bf Claims}
    \item[] Question: Do the main claims made in the abstract and introduction accurately reflect the paper's contributions and scope?
    \item[] Answer:  \answerYes{}
    \item[] Justification: The abstract and introduction clearly state the contributions and scope of our proposed algorithm, including both theoretical and experimental results.
    \item[] Guidelines:
    \begin{itemize}
        \item The answer NA means that the abstract and introduction do not include the claims made in the paper.
        \item The abstract and/or introduction should clearly state the claims made, including the contributions made in the paper and important assumptions and limitations. A No or NA answer to this question will not be perceived well by the reviewers. 
        \item The claims made should match theoretical and experimental results, and reflect how much the results can be expected to generalize to other settings. 
        \item It is fine to include aspirational goals as motivation as long as it is clear that these goals are not attained by the paper. 
    \end{itemize}

\item {\bf Limitations}
    \item[] Question: Does the paper discuss the limitations of the work performed by the authors?
    \item[] Answer: \answerYes{}
    \item[] Justification: We  discuss the limitations of the work in Conclusions.
    \item[] Guidelines:
    \begin{itemize}
        \item The answer NA means that the paper has no limitation while the answer No means that the paper has limitations, but those are not discussed in the paper. 
        \item The authors are encouraged to create a separate "Limitations" section in their paper.
        \item The paper should point out any strong assumptions and how robust the results are to violations of these assumptions (e.g., independence assumptions, noiseless settings, model well-specification, asymptotic approximations only holding locally). The authors should reflect on how these assumptions might be violated in practice and what the implications would be.
        \item The authors should reflect on the scope of the claims made, e.g., if the approach was only tested on a few datasets or with a few runs. In general, empirical results often depend on implicit assumptions, which should be articulated.
        \item The authors should reflect on the factors that influence the performance of the approach. For example, a facial recognition algorithm may perform poorly when image resolution is low or images are taken in low lighting. Or a speech-to-text system might not be used reliably to provide closed captions for online lectures because it fails to handle technical jargon.
        \item The authors should discuss the computational efficiency of the proposed algorithms and how they scale with dataset size.
        \item If applicable, the authors should discuss possible limitations of their approach to address problems of privacy and fairness.
        \item While the authors might fear that complete honesty about limitations might be used by reviewers as grounds for rejection, a worse outcome might be that reviewers discover limitations that aren't acknowledged in the paper. The authors should use their best judgment and recognize that individual actions in favor of transparency play an important role in developing norms that preserve the integrity of the community. Reviewers will be specifically instructed to not penalize honesty concerning limitations.
    \end{itemize}

\item {\bf Theory assumptions and proofs}
    \item[] Question: For each theoretical result, does the paper provide the full set of assumptions and a complete (and correct) proof?
    \item[] Answer: \answerYes{}
    \item[] Justification: All theoretical results have clearly stated assumptions, and complete proofs are provided in the appendix.
    \item[] Guidelines:
    \begin{itemize}
        \item The answer NA means that the paper does not include theoretical results. 
        \item All the theorems, formulas, and proofs in the paper should be numbered and cross-referenced.
        \item All assumptions should be clearly stated or referenced in the statement of any theorems.
        \item The proofs can either appear in the main paper or the supplemental material, but if they appear in the supplemental material, the authors are encouraged to provide a short proof sketch to provide intuition. 
        \item Inversely, any informal proof provided in the core of the paper should be complemented by formal proofs provided in appendix or supplemental material.
        \item Theorems and Lemmas that the proof relies upon should be properly referenced. 
    \end{itemize}

    \item {\bf Experimental result reproducibility}
    \item[] Question: Does the paper fully disclose all the information needed to reproduce the main experimental results of the paper to the extent that it affects the main claims and/or conclusions of the paper (regardless of whether the code and data are provided or not)?
    \item[] Answer: \answerYes{}
    \item[] Justification: All information required to reproduce the main results is provided in Section Experiments.
    \item[] Guidelines:
    \begin{itemize}
        \item The answer NA means that the paper does not include experiments.
        \item If the paper includes experiments, a No answer to this question will not be perceived well by the reviewers: Making the paper reproducible is important, regardless of whether the code and data are provided or not.
        \item If the contribution is a dataset and/or model, the authors should describe the steps taken to make their results reproducible or verifiable. 
        \item Depending on the contribution, reproducibility can be accomplished in various ways. For example, if the contribution is a novel architecture, describing the architecture fully might suffice, or if the contribution is a specific model and empirical evaluation, it may be necessary to either make it possible for others to replicate the model with the same dataset, or provide access to the model. In general. releasing code and data is often one good way to accomplish this, but reproducibility can also be provided via detailed instructions for how to replicate the results, access to a hosted model (e.g., in the case of a large language model), releasing of a model checkpoint, or other means that are appropriate to the research performed.
        \item While NeurIPS does not require releasing code, the conference does require all submissions to provide some reasonable avenue for reproducibility, which may depend on the nature of the contribution. For example
        \begin{enumerate}
            \item If the contribution is primarily a new algorithm, the paper should make it clear how to reproduce that algorithm.
            \item If the contribution is primarily a new model architecture, the paper should describe the architecture clearly and fully.
            \item If the contribution is a new model (e.g., a large language model), then there should either be a way to access this model for reproducing the results or a way to reproduce the model (e.g., with an open-source dataset or instructions for how to construct the dataset).
            \item We recognize that reproducibility may be tricky in some cases, in which case authors are welcome to describe the particular way they provide for reproducibility. In the case of closed-source models, it may be that access to the model is limited in some way (e.g., to registered users), but it should be possible for other researchers to have some path to reproducing or verifying the results.
        \end{enumerate}
    \end{itemize}

\item {\bf Open access to data and code}
    \item[] Question: Does the paper provide open access to the data and code, with sufficient instructions to faithfully reproduce the main experimental results, as described in supplemental material?
    \item[] Answer: \answerYes{}
    \item[] Justification: We provide open access to the code at the anonymous link: \url{https://github.com/HaoxinSun98/FJ-PF}. This URL is included in the submission, and the data used in our experiments are all publicly available.
    \item[] Guidelines:
    \begin{itemize}
        \item The answer NA means that paper does not include experiments requiring code.
        \item Please see the NeurIPS code and data submission guidelines (\url{https://nips.cc/public/guides/CodeSubmissionPolicy}) for more details.
        \item While we encourage the release of code and data, we understand that this might not be possible, so “No” is an acceptable answer. Papers cannot be rejected simply for not including code, unless this is central to the contribution (e.g., for a new open-source benchmark).
        \item The instructions should contain the exact command and environment needed to run to reproduce the results. See the NeurIPS code and data submission guidelines (\url{https://nips.cc/public/guides/CodeSubmissionPolicy}) for more details.
        \item The authors should provide instructions on data access and preparation, including how to access the raw data, preprocessed data, intermediate data, and generated data, etc.
        \item The authors should provide scripts to reproduce all experimental results for the new proposed method and baselines. If only a subset of experiments are reproducible, they should state which ones are omitted from the script and why.
        \item At submission time, to preserve anonymity, the authors should release anonymized versions (if applicable).
        \item Providing as much information as possible in supplemental material (appended to the paper) is recommended, but including URLs to data and code is permitted.
    \end{itemize}

\item {\bf Experimental setting/details}
    \item[] Question: Does the paper specify all the training and test details (e.g., data splits, hyperparameters, how they were chosen, type of optimizer, etc.) necessary to understand the results?
    \item[] Answer:  \answerYes{}
    \item[] Justification: The full details can be found in Section Experiments.
    \item[] Guidelines:
    \begin{itemize}
        \item The answer NA means that the paper does not include experiments.
        \item The experimental setting should be presented in the core of the paper to a level of detail that is necessary to appreciate the results and make sense of them.
        \item The full details can be provided either with the code, in appendix, or as supplemental material.
    \end{itemize}

\item {\bf Experiment statistical significance}
    \item[] Question: Does the paper report error bars suitably and correctly defined or other appropriate information about the statistical significance of the experiments?
\item[] Answer: \answerYes{} 
\item[] Justification: We show the standard deviation in the APPENDIX.
    \item[] Guidelines:
    \begin{itemize}
        \item The answer NA means that the paper does not include experiments.
        \item The authors should answer "Yes" if the results are accompanied by error bars, confidence intervals, or statistical significance tests, at least for the experiments that support the main claims of the paper.
        \item The factors of variability that the error bars are capturing should be clearly stated (for example, train/test split, initialization, random drawing of some parameter, or overall run with given experimental conditions).
        \item The method for calculating the error bars should be explained (closed form formula, call to a library function, bootstrap, etc.)
        \item The assumptions made should be given (e.g., Normally distributed errors).
        \item It should be clear whether the error bar is the standard deviation or the standard error of the mean.
        \item It is OK to report 1-sigma error bars, but one should state it. The authors should preferably report a 2-sigma error bar than state that they have a 96\% CI, if the hypothesis of Normality of errors is not verified.
        \item For asymmetric distributions, the authors should be careful not to show in tables or figures symmetric error bars that would yield results that are out of range (e.g. negative error rates).
        \item If error bars are reported in tables or plots, The authors should explain in the text how they were calculated and reference the corresponding figures or tables in the text.
    \end{itemize}

\item {\bf Experiments compute resources}
    \item[] Question: For each experiment, does the paper provide sufficient information on the computer resources (type of compute workers, memory, time of execution) needed to reproduce the experiments?
    \item[] Answer: \answerYes{}
    \item[] Justification: The details can be found in Section Experiments.
    \item[] Guidelines:
    \begin{itemize}
        \item The answer NA means that the paper does not include experiments.
        \item The paper should indicate the type of compute workers CPU or GPU, internal cluster, or cloud provider, including relevant memory and storage.
        \item The paper should provide the amount of compute required for each of the individual experimental runs as well as estimate the total compute. 
        \item The paper should disclose whether the full research project required more compute than the experiments reported in the paper (e.g., preliminary or failed experiments that didn't make it into the paper). 
    \end{itemize}
    
\item {\bf Code of ethics}
    \item[] Question: Does the research conducted in the paper conform, in every respect, with the NeurIPS Code of Ethics \url{https://neurips.cc/public/EthicsGuidelines}?
    \item[] Answer: \answerYes{} 
    \item[] Justification: This work fully complies with the NeurIPS Code of Ethics. All datasets are publicly available with proper attribution, and the code is released via an anonymous repository. No human subjects or sensitive data were involved.
    \item[] Guidelines:
    \begin{itemize}
        \item The answer NA means that the authors have not reviewed the NeurIPS Code of Ethics.
        \item If the authors answer No, they should explain the special circumstances that require a deviation from the Code of Ethics.
        \item The authors should make sure to preserve anonymity (e.g., if there is a special consideration due to laws or regulations in their jurisdiction).
    \end{itemize}

\item {\bf Broader impacts}
    \item[] Question: Does the paper discuss both potential positive societal impacts and negative societal impacts of the work performed?
    \item[] Answer: \answerNo{} 
    \item[] Justification: This paper does not explicitly discuss societal impacts. However, the problems addressed in our work are based on previously published formulations~\cite{ZhBaZh21} presented at NeurIPS 2021, and our focus is on improving algorithmic efficiency.
    \item[] Guidelines:
    \begin{itemize}
        \item The answer NA means that there is no societal impact of the work performed.
        \item If the authors answer NA or No, they should explain why their work has no societal impact or why the paper does not address societal impact.
        \item Examples of negative societal impacts include potential malicious or unintended uses (e.g., disinformation, generating fake profiles, surveillance), fairness considerations (e.g., deployment of technologies that could make decisions that unfairly impact specific groups), privacy considerations, and security considerations.
        \item The conference expects that many papers will be foundational research and not tied to particular applications, let alone deployments. However, if there is a direct path to any negative applications, the authors should point it out. For example, it is legitimate to point out that an improvement in the quality of generative models could be used to generate deepfakes for disinformation. On the other hand, it is not needed to point out that a generic algorithm for optimizing neural networks could enable people to train models that generate Deepfakes faster.
        \item The authors should consider possible harms that could arise when the technology is being used as intended and functioning correctly, harms that could arise when the technology is being used as intended but gives incorrect results, and harms following from (intentional or unintentional) misuse of the technology.
        \item If there are negative societal impacts, the authors could also discuss possible mitigation strategies (e.g., gated release of models, providing defenses in addition to attacks, mechanisms for monitoring misuse, mechanisms to monitor how a system learns from feedback over time, improving the efficiency and accessibility of ML).
    \end{itemize}
    
\item {\bf Safeguards}
    \item[] Question: Does the paper describe safeguards that have been put in place for responsible release of data or models that have a high risk for misuse (e.g., pretrained language models, image generators, or scraped datasets)?
    \item[] Answer: \answerNA{} 
    \item[] Justification: This paper does not involve the release of high-risk data or models such as pretrained language models.  
    \item[] Guidelines:
    \begin{itemize}
        \item The answer NA means that the paper poses no such risks.
        \item Released models that have a high risk for misuse or dual-use should be released with necessary safeguards to allow for controlled use of the model, for example by requiring that users adhere to usage guidelines or restrictions to access the model or implementing safety filters. 
        \item Datasets that have been scraped from the Internet could pose safety risks. The authors should describe how they avoided releasing unsafe images.
        \item We recognize that providing effective safeguards is challenging, and many papers do not require this, but we encourage authors to take this into account and make a best faith effort.
    \end{itemize}

\item {\bf Licenses for existing assets}
    \item[] Question: Are the creators or original owners of assets (e.g., code, data, models), used in the paper, properly credited and are the license and terms of use explicitly mentioned and properly respected?
    \item[] Answer: \answerYes{} 
    \item[] Justification: All existing assets (datasets and code) are publicly available with proper citations in Section Experiments.
    \item[] Guidelines:
    \begin{itemize}
        \item The answer NA means that the paper does not use existing assets.
        \item The authors should cite the original paper that produced the code package or dataset.
        \item The authors should state which version of the asset is used and, if possible, include a URL.
        \item The name of the license (e.g., CC-BY 4.0) should be included for each asset.
        \item For scraped data from a particular source (e.g., website), the copyright and terms of service of that source should be provided.
        \item If assets are released, the license, copyright information, and terms of use in the package should be provided. For popular datasets, \url{paperswithcode.com/datasets} has curated licenses for some datasets. Their licensing guide can help determine the license of a dataset.
        \item For existing datasets that are re-packaged, both the original license and the license of the derived asset (if it has changed) should be provided.
        \item If this information is not available online, the authors are encouraged to reach out to the asset's creators.
    \end{itemize}

\item {\bf New assets}
    \item[] Question: Are new assets introduced in the paper well documented and is the documentation provided alongside the assets?
    \item[] Answer: \answerYes{} 
    \item[] Justification:  The source code is released at \url{https://github.com/HaoxinSun98/FJ-PF}.
    \item[] Guidelines:
    \begin{itemize}
        \item The answer NA means that the paper does not release new assets.
        \item Researchers should communicate the details of the dataset/code/model as part of their submissions via structured templates. This includes details about training, license, limitations, etc. 
        \item The paper should discuss whether and how consent was obtained from people whose asset is used.
        \item At submission time, remember to anonymize your assets (if applicable). You can either create an anonymized URL or include an anonymized zip file.
    \end{itemize}

\item {\bf Crowdsourcing and research with human subjects}
    \item[] Question: For crowdsourcing experiments and research with human subjects, does the paper include the full text of instructions given to participants and screenshots, if applicable, as well as details about compensation (if any)? 
    \item[] Answer: \answerNA{} 
    \item[] Justification: This work does not involve any crowdsourcing, human subject research. All experiments are conducted on publicly available datasets.  
    \item[] Guidelines:
    \begin{itemize}
        \item The answer NA means that the paper does not involve crowdsourcing nor research with human subjects.
        \item Including this information in the supplemental material is fine, but if the main contribution of the paper involves human subjects, then as much detail as possible should be included in the main paper. 
        \item According to the NeurIPS Code of Ethics, workers involved in data collection, curation, or other labor should be paid at least the minimum wage in the country of the data collector. 
    \end{itemize}

\item {\bf Institutional review board (IRB) approvals or equivalent for research with human subjects}
    \item[] Question: Does the paper describe potential risks incurred by study participants, whether such risks were disclosed to the subjects, and whether Institutional Review Board (IRB) approvals (or an equivalent approval/review based on the requirements of your country or institution) were obtained?
    \item[] Answer: \answerNA{} 
    \item[] Justification: This work does not involve any human subject research, personal data collection, or activities requiring IRB approval.
    \item[] Guidelines:
    \begin{itemize}
        \item The answer NA means that the paper does not involve crowdsourcing nor research with human subjects.
        \item Depending on the country in which research is conducted, IRB approval (or equivalent) may be required for any human subjects research. If you obtained IRB approval, you should clearly state this in the paper. 
        \item We recognize that the procedures for this may vary significantly between institutions and locations, and we expect authors to adhere to the NeurIPS Code of Ethics and the guidelines for their institution. 
        \item For initial submissions, do not include any information that would break anonymity (if applicable), such as the institution conducting the review.
    \end{itemize}

\item {\bf Declaration of LLM usage}
    \item[] Question: Does the paper describe the usage of LLMs if it is an important, original, or non-standard component of the core methods in this research? Note that if the LLM is used only for writing, editing, or formatting purposes and does not impact the core methodology, scientific rigorousness, or originality of the research, declaration is not required.
    \item[] Answer: \answerNA{} 
    \item[] Justification: The core methodology of this work   does not involve any use of large language models (LLMs) as an original or non-standard component. LLMs were not employed in algorithm design, experiments, or analysis.
    \item[] Guidelines:
    \begin{itemize}
        \item The answer NA means that the core method development in this research does not involve LLMs as any important, original, or non-standard components.
        \item Please refer to our LLM policy (\url{https://neurips.cc/Conferences/2025/LLM}) for what should or should not be described.
    \end{itemize}

\end{enumerate}

\newpage
\appendix
\section{APPENDIX}




\subsection{Pseudocode of Algorithm \textsc{PFS}}
 
\begin{algorithm}[htbp!]
\caption{$\textsc{PFS}(\calG,\SS,l)$}
\label{alg:partial-rooted-forest}
\Input{Graph $\calG=(V,E)$, node set $\SS = \{v_1, \ldots, v_p\}$, number of samples $l$}
\Output{Partial rooted forest list $L$}
$L \leftarrow [\ ]$\;
\For{$k \leftarrow 1$ \KwTo $l$}{
	$V_\phi \leftarrow \emptyset$, $E_\phi \leftarrow \emptyset$\;
	\For{$i \leftarrow 1$ \KwTo $p$}{
		$u \leftarrow v_i$, $P_u \leftarrow [\ ]$, $c \leftarrow u$\;
		\While{True}{
			\eIf{$c \in V_\phi$}{
				\textbf{break}\;
			}{
				\eIf{random() < $\frac{1}{1 + d_{c}}$}{
					$P_u \leftarrow P_u \cup \{c\}$\;
					mark $c$ as root node\;
					\textbf{break}\;
				}{
					$next \leftarrow$ random choice from $N^+_{c}$\;
					$P_u \leftarrow P_u \cup \{c, (c, next)\}$\;
					$c \leftarrow next$\;
				}
			}
		}
		$\hat{P}_u \leftarrow$ Apply loop-erasure to $P_u$\;
		$V_\phi \leftarrow V_\phi \cup \text{nodes in } \hat{P}_u$\;
		$E_\phi \leftarrow E_\phi \cup \text{edges in } \hat{P}_u$\;
	}
	Add $\phi = (V_\phi, E_\phi)$ to $L$\;
}
\Return $L$\;
\end{algorithm}   

\subsection{Proof of Lemma~\ref{le-estimate}}
\begin{proof}
	Since $z_i = \sum_{j=1}^n \omega_{ij}s_j$, the proof is reduced to showing  $\mathbb{E}[\mathbb{I}_{\{r_{\phi}(i)=j\}}] = \omega_{ij}$. Consider the proof of Theorem~\ref{th-partial-rooted-forest}, and we extend the partial rooted forest $\phi$ to a complete rooted forest $\phi'$. In $\phi$ and $\phi'$, we have $r_{\phi}(i) = r_{\phi'}(i)$ directly, since the nodes in $\phi$ are a subset of the nodes in $\phi'$. Using the Matrix Forest Theorem~\cite{ChSh97,ChSh98}, we have $\mathbb{E}[\mathbb{I}_{\{r_{\phi'}(i)=j\}}] = \omega_{ij}$. Thus, $\mathbb{E}[\mathbb{I}_{\{r_{\phi}(i)=j\}}] = \omega_{ij}$, and the estimator $\widehat{z}_i$ is unbiased.
\end{proof}

\subsection{Hoeffding's inequality}

 \begin{lemma}[Hoeffding's inequality~\cite{Ho94}]\label{le-ho}
Let $x_1,x_2,\cdots, x_n$ be $l$ independent random variables satisfying $a \leq x_i \leq b$ for all $i=1,2,\cdots,n$. Let $x=\frac{1}{l}\sum_{i=1}^l x_i$. Then for any $\epsilon>0$, $\mathbb{P}(|x-\mathbb{E}(x)| \ge \epsilon) \le 2 \, {\rm exp}\left(-\frac{2l \epsilon^2}{(b-a)^2}\right)$. 
\end{lemma}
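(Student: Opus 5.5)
We treat the indices consistently as $x_1,\dots,x_l$, which is clearly what the statement intends. The plan is the standard Chernoff argument. We bound the upper tail $\mathbb{P}(x-\mathbb{E}(x)\ge\epsilon)$ and the lower tail separately by $\exp(-2l\epsilon^2/(b-a)^2)$, and then add the two bounds. The lower tail follows from the upper tail applied to the variables $-x_i$, which lie in $[-b,-a]$, an interval of the same length $b-a$. So only the upper tail needs work.

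For the upper tail, set $y_i=x_i-\mathbb{E}(x_i)$. Each $y_i$ has mean zero and lies in an interval of length $b-a$. For any $\lambda>0$, Markov's inequality applied to $e^{\lambda\sum_i y_i}$ together with independence gives
\begin{equation*}
\mathbb{P}\Bigl(\sum_{i=1}^l y_i\ge l\epsilon\Bigr)\le e^{-\lambda l\epsilon}\prod_{i=1}^l\mathbb{E}\bigl[e^{\lambda y_i}\bigr].
\end{equation*}
We then plug in Hoeffding's lemma, $\mathbb{E}[e^{\lambda y_i}]\le \exp(\lambda^2(b-a)^2/8)$. This makes the right-hand side at most $\exp\bigl(-\lambda l\epsilon+l\lambda^2(b-a)^2/8\bigr)$. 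Choosing $\lambda=4\epsilon/(b-a)^2$ minimises the exponent and yields $\exp(-2l\epsilon^2/(b-a)^2)$, as required.

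The main step is proving Hoeffding's lemma itself. Let $Y$ have mean zero and take values in $[\alpha,\beta]$, with $\beta-\alpha=b-a$.
\begin{enumerate}
\item By convexity of $t\mapsto e^{\lambda t}$, for $t\in[\alpha,\beta]$ we have $e^{\lambda t}\le\frac{\beta-t}{\beta-\alpha}e^{\lambda\alpha}+\frac{t-\alpha}{\beta-\alpha}e^{\lambda\beta}$.
\item Taking expectations and using $\mathbb{E}Y=0$ gives $\mathbb{E}e^{\lambda Y}\le e^{L(h)}$. Here $h=\lambda(\beta-\alpha)$, $L(h)=-h\theta+\log(1-\theta+\theta e^{h})$, and $\theta=-\alpha/(\beta-\alpha)\in[0,1]$.
\item Direct computation gives $L(0)=L'(0)=0$.
\item We also get $L''(h)=q(1-q)\le 1/4$, where $q=\theta e^h/(1-\theta+\theta e^h)$.
\item Taylor's theorem with remainder then gives $L(h)\le h^2/8=\lambda^2(\beta-\alpha)^2/8$.
\end{enumerate}
The only delicate point is the calculus in steps 3 to 5, and it is routine. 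Combining the two tails gives the stated factor of $2$.
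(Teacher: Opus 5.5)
Your proposal is correct. It is the standard Chernoff-bound argument combined with Hoeffding's lemma (the convexity bound, $L(0)=L'(0)=0$ and $L''\le 1/4$, then the choice $\lambda=4\epsilon/(b-a)^2$), which is essentially the proof in Hoeffding's original paper; the present paper gives no proof of its own and simply cites that reference, so your argument matches the intended one.
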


\subsection{Proof of Theorem~\ref{th-estimation}}
\begin{proof}
Hoeffding's inequality, presented in Lemma~\ref{le-ho}, serves as a powerful tool for estimating the required sample size $l$ to achieve a desired error guarantee. This inequality has been widely applied in various studies, such as~\cite{SuZh23, NeDoPe24}. Utilizing Lemma~\ref{le-estimate} alongside Hoeffding's inequality, selecting $l   = O(\frac{1}{\epsilon^2}\log\frac{1}{\delta})$ ensures that the estimated value $\widehat{z_i}$ deviates from the true value $z_i$ by at most an absolute error of $\epsilon$, with probability at least $1 - \delta$, for any node $i\in V$. Furthermore, according to Lemma 2.2 in Section 6 in~\cite{BhArYo22}, by choosing $p =   O(\frac{1}{\epsilon^2}\log\frac{1}{\delta})$ and setting $\SS = \{v_1, \cdots, v_p\}$, we can guarantee that the sum $\sum_{i \in \SS} {z_i}$ approximates $\sum_{i \in  V} z_i$ within an absolute error of $n\epsilon$, with probability at least $1 - \delta$. It is important to note that $z_i \in [0,1]$, by setting $l =   O(\frac{1}{\epsilon^2}\log\frac{1}{\delta})$ and $p = O(\frac{1}{\epsilon^2}\log\frac{1}{\delta})$, the opinion-based measures $D(\calG)$, $P(\calG)$, $I(\calG)$, $C(\calG)$, and $DC(\calG)$ can be estimated within an absolute error of $n\epsilon$, with probability at least $1 - \delta$.
\end{proof}

\subsection{Further Experiment Results}
In this subsection, we report the standard deviations for two algorithms across five networks in Table~\ref{tb5} and provide further experimental results based on the exponential distribution in Table~\ref{tb6} and Table~\ref{tb7}. 

\begin{table*}[htbp!]
\setlength{\tabcolsep}{1.5 mm}
\renewcommand{\arraystretch}{1.1}
\caption{Standard Deviation for two algorithms on five networks. Internal opinions are generated using the uniform distribution.}\label{tb5}
	\begin{tabular}{ccccccccccccccc}
\hline
\multirow{3}{*}{Network} &  & \multicolumn{13}{c}{Standard Deviation in $\%$} \\ \cline{3-15} 
 &  & \multicolumn{6}{c}{LazyWalk} & \multicolumn{1}{l}{} & \multicolumn{6}{c}{PF-QE} \\ \cline{3-8} \cline{10-15} 
 &  & $z$ & $P$ & $I$ & $C$ & $DC$ & $D$ &  & $z$ & $P$ & $I$ & $C$ & $DC$ & $D$ \\ \cline{1-1} \cline{3-8} \cline{10-15} 
Delicious &  & 0.12 & 6.33 & 6.94 & 2.73 & 4.15 & 3.59 &  & 0.14 & 7.27 & 7.02 & 2.86 & 4.20 & 3.67 \\
Youtube &  & 0.006 & 0.93 & 0.70 & 0.13 & 0.27 & 2.18 &  & 0.009 & 0.98 & 0.73 & 0.17 & 0.30 & 2.04 \\
Pokec &  & 0.006 & 1.91 & 0.43 & 0.10 & 0.15 & 4.83 &  & 0.01 & 1.88 & 0.43 & 0.08 & 0.12 & 5.06 \\
Orkut &  & 0.02 & 4.02 & 0.91 & 0.16 & 0.71 & 44.83 &  & 0.02 & 3.32 & 0.92 & 0.25 & 0.47 & 24.17 \\
Livejournal &  & 0.006 & 1.15 & 0.55 & 0.14 & 0.33 & 4.84 &  & 0.05 & 1.17 & 0.59 & 0.43 & 0.36 & 4.60 \\ \hline
\end{tabular}
\end{table*}

\begin{table*}[htbp!]
\small
\renewcommand{\arraystretch}{1.1}
\setlength{\tabcolsep}{1.8pt} 
\caption{Mean relative errors and running times for three algorithms on six networks. Internal opinions are generated using the exponential distribution.}\label{tb6}
\begin{tabular}{clccccccccccccccccc}
\hline
\multirow{3}{*}{Network} &  & \multicolumn{3}{c}{Time(s)} &  & \multicolumn{13}{c}{Reletive Error in $\%$} \\ \cline{3-5} \cline{7-19} 
 &  & \multirow{2}{*}{LapSolver} & \multirow{2}{*}{LazyWalk} & \multirow{2}{*}{PF-QE} &  &  & \multicolumn{5}{c}{LazyWalk} &  & \multicolumn{6}{c}{PF-QE} \\ \cline{7-12} \cline{14-19} 
 &  &  &  &  &  & $z$ & \multicolumn{1}{c}{$P$} & \multicolumn{1}{c}{$I$} & \multicolumn{1}{c}{$C$} & \multicolumn{1}{c}{$DC$} & \multicolumn{1}{c}{$D$} &  & $z$ & \multicolumn{1}{c}{$P$} & \multicolumn{1}{c}{$I$} & \multicolumn{1}{c}{$C$} & \multicolumn{1}{c}{$DC$} & \multicolumn{1}{c}{$D$} \\ \cline{1-1} \cline{3-5} \cline{7-12} \cline{14-19} 
Delicious &  & 6.6 & 2.3 & 2.1 &  & 2.50 & 2.79 & 3.09 & 1.66 & 1.17 & 3.81 &  & 2.41 & 1.96 & 3.05 & 0.79 & 1.10 & 3.75 \\
Youtube &  & 9.6 & 4.4 & 2.6 &  & 1.90 & 3.25 & 2.93 & 0.78 & 1.41 & 6.32 &  & 2.36 & 3.33 & 2.89 & 0.77 & 1.38 & 6.49 \\
Pokec &  & 48.5 & 42.7 & 5.4 &  & 3.35 & 3.72 & 2.89 & 0.48 & 1.32 & 10.91 &  & 2.50 & 3.64 & 2.72 & 0.43 & 1.30 & 10.85 \\
Orkut &  & 320.8 & 153.2 & 9.7 &  &2.92  & 8.00 & 1.95 & 5.29 & 2.06 & 76.66 &  &2.51  & 12.54 & 1.97 & 0.41 & 1.02 & 25.67 \\
Livejournal &  & 239.5 & 70.7 & 6.6 &  &2.44  &3.22 & 2.87 & 1.82 & 1.24 & 5.06 &  &2.40  & 3.10 & 2.89 & 1.38 & 1.50 & 5.11 \\
Twitter &  & - & 168.6 & 34.9 &  &  & \multicolumn{1}{c}{-} & \multicolumn{1}{c}{-} & \multicolumn{1}{c}{-} & \multicolumn{1}{c}{-} & \multicolumn{1}{c}{-} &  &  & \multicolumn{1}{c}{-} & \multicolumn{1}{c}{-} & \multicolumn{1}{c}{-} & \multicolumn{1}{c}{-} & \multicolumn{1}{c}{-} \\ \hline
\end{tabular}
\end{table*}

\begin{table*}[htbp!]
\setlength{\tabcolsep}{1.5 mm}
\renewcommand{\arraystretch}{1.1}
\caption{Standard Deviation for two algorithms on five networks. Internal opinions are generated using the exponential distribution.}\label{tb7}
\begin{tabular}{ccccccccccccccc}
\hline
\multirow{3}{*}{Network} &  & \multicolumn{13}{c}{Standard Deviation in $\%$} \\ \cline{3-15} 
 &  & \multicolumn{6}{c}{LazyWalk} & \multicolumn{1}{l}{} & \multicolumn{6}{c}{PF-QE} \\ \cline{3-8} \cline{10-15} 
 &  & $z$ & $P$ & $I$ & $C$ & $DC$ & $D$ &  & $z$ & $P$ & $I$ & $C$ & $DC$ & $D$ \\ \cline{1-1} \cline{3-8} \cline{10-15} 
Delicious &  & 0.008 & 1.49 & 2.15 & 0.59 & 0.87 & 3.07 &  & 0.01 & 1.37 & 2.12 & 0.57 & 0.89 & 3.00 \\
Youtube &  & 0.01 & 2.06 & 2.36 & 1.02 & 1.58 & 4.58 &  & 0.02 & 1.97 & 2.32 & 0.99 & 1.52 & 4.50 \\
Pokec &  & 0.01 & 3.29 & 1.95 & 0.33 & 0.87 & 8.12 &  & 0.02 & 3.24 & 1.87 & 0.34 & 0.883 & 7.63 \\
Orkut &  & 0.02 & 6.37 & 1.18 & 0.30 & 1.31 & 27.45 &  & 0.02 & 7.66 & 1.10 & 0.26 & 0.84 & 21.37 \\
Livejournal &  & 0.009 & 2.86 & 2.38 & 0.81 & 1.30 & 5.25 &  & 0.05 & 2.88 & 2.38 & 1.02 & 1.29 & 3.95 \\ \hline
\end{tabular}
\end{table*}

\end{document}